\newcommand{\tr}{\mathrm{tr}}
\newcommand{\tp}{\mathrm{T}}
\newcommand{\iden}{\mathbb{I}}
\newcommand{\nb}{\bar{n}}
\newcommand{\Swap}{\textsc{swap}}
\newcommand{\ThEx}{\mathfrak{n}}
\newcommand{\FirM}{\mathbf{r}}
\newcommand{\SecM}{\mathbf{M}}
\newcommand{\GU}{\mathbf{G}}
\newcommand{\SU}{\mathbf{S}}
\newcommand{\CU}{\mathbf{C}}
\newcommand{\LU}{\mathbf{L}}
\newcommand{\mmZ}{\mathbf{Z}}
\newtheorem{theorem}{Theorem}
\newtheorem{definition}{Definition}
\newtheorem{lemma}{Lemma}
\newcommand{\mE}{\mathcal{E}}
\newcommand{\CQT}{Centre for Quantum Technologies, National University of Singapore, 3 Science Drive 2, Singapore 117543}
\newcommand{\NUS}{Department of Physics, National University of Singapore, 2 Science Drive 3, Singapore 117542}
\newcommand{\Shandong}{School of Information Science and Engineering, Shandong University, Qingdao 266237, China}
\begin{document}

\preprint{APS/123-QED}

\title{Sub-Bath Cooling in Bosonic Systems:\\Gaussian Constraints and Non-Gaussian Enhancements}

\author{Wen-Han Png}
\affiliation{\CQT}

\author{Xueyuan Hu}%
\email{xyhu@sdu.edu.cn}
\affiliation{\Shandong}%

\author{Valerio Scarani}
\email{physv@nus.edu.sg}
\affiliation{\CQT}
\affiliation{\NUS}

\date{\today}

\begin{abstract} 

Cooling quantum systems with finite resources is a central task in quantum technologies and has been extensively explored in discrete-variable settings. As continuous-variable (CV) platforms play an increasingly important role in quantum information processing, it becomes crucial to understand the fundamental limitations of cooling bosonic systems. In this work, we develop a general framework for cooling CV systems, identifying both the constraints imposed by Gaussianity and the advantages enabled by non-Gaussian interactions. We derive a reachable bound on the cooling performance of Gaussian operations that applies to arbitrary cooling architectures. By optimizing over all protocols saturating this bound, we further identify the most efficient scheme, which minimizes dissipated energy for a given number of ancilla modes. Beyond Gaussian operations, we show that $p$-excitation exchange exploits non-Gaussian resources to achieve a $p$-fold enhancement of the cooling limit. Our results establish the fundamental limits of CV heat-bath algorithmic cooling and reveal the crucial role of non-Gaussianity in surpassing Gaussian cooling barriers.

\end{abstract}




\maketitle

\section{\label{sec:level1} Introduction}
Continuous-variable (CV) systems are key platforms for quantum information science. They arise naturally in circuit QED, integrated photonics, optomechanics, and trapped-ion motional modes. Achieving near-ground-state preparation is crucial in CV systems, as it underpins coherence and entanglement that enable enhanced precision in quantum sensing \cite{giovannetti2011advances,toth2014quantum,demkowicz2015quantum}, robust quantum communication \cite{gisin2002quantum,pirandola2020advances}, and fault-tolerant quantum computation \cite{preskill1998fault,preskill2018quantum,fukui2018high}.
A direct route to near-ground-state preparation is achieved through cooling by contact with a cold environment. This mechanism is inherently incoherent, and millikelvin-level cooling of continuous-variable systems has been demonstrated in superconducting platforms such as transmon devices \cite{kjaergaard2020superconducting,krantz2019quantum} and optomechanical systems \cite{diamandi2025optomechanical}. Records of sub-millikelvin temperatures have been reported using sideband-resolved optomechanical cooling~\cite{teufel2011sideband}, a protocol that falls within the category of Gaussian operations. Beyond Gaussian operations, quantum absorption refrigeration based on the trilinear Hamiltonian \cite{nimmrichter2017quantum} has also been experimentally realized in a trapped-ion system \cite{maslennikov2019quantum}, demonstrating the integration of non-Gaussian operations into cooling protocols.
\begin{figure}
    \centering
    \includegraphics[width=1\linewidth]{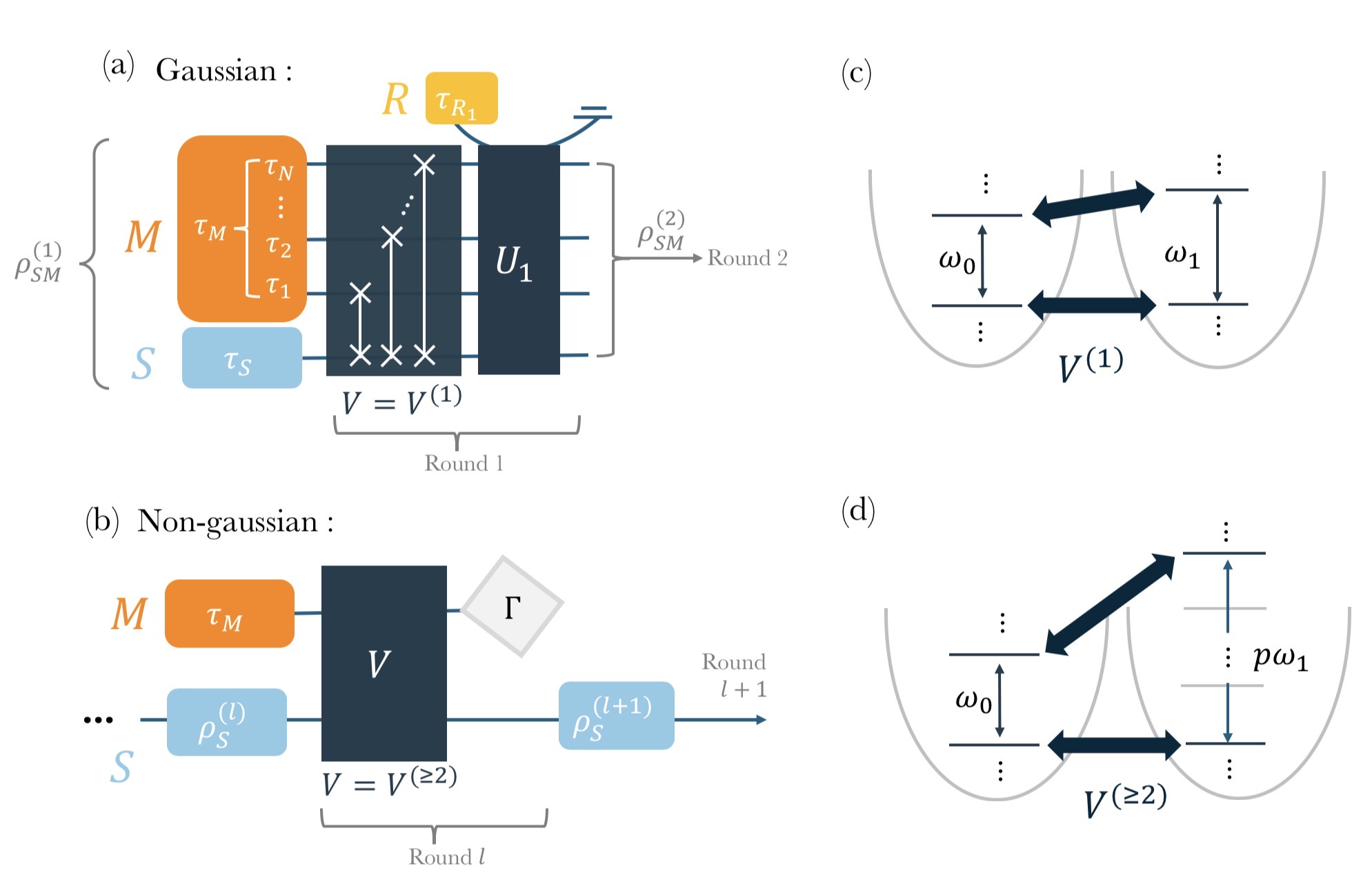}
    \caption{ Continuous-variable HBAC framework with two routines: recharging and thermalizing. $V$ denotes the recharging unitary;  $U_1$ and $\Gamma$ denote the thermalizing unitary and the full thermalization channel, respectively.
    (a) The first round of the most general Gaussian heat-bath algorithmic cooling using $N$ ancillas; The optimal recharging routine consists of successive full $\Swap$ operations, i.e., the single-excitation exchange described in (c). The output state becomes the input state of the next iteration, $\rho_{SM}^{(2)}$. (b) Example of iterative cooling with non-Gaussian operations studied in this paper. At the $l$-th recharging routine, the system interacts with a refreshed machine $\tau_M$, producing $\rho_S^{(l+1)}$ which serves as the input for the next round. The recharging routine is a $p$-excitation exchange operation for $p\geq 2$ illustrated in (d).}
    \label{fig:Overview}
\end{figure}

These experimental advances naturally raise a broader theoretical question: what are the fundamental cooling capabilities of Gaussian and non-Gaussian operations?
A notable step in this direction was made in Ref.~\cite{PhysRevLett.124.010602}, where no-cooling theorem was established for single-mode Gaussian thermal operations, even when assisted by Gaussian unitaries and catalytic states. However, the multimode generalization of the Gaussian constraint in cooling, as well as the corresponding bounds under finite-time and finite-resource constraints, are still unknown. Beyond Gaussian operations, non-Gaussian resources have been recognized as indispensable in other quantum tasks, including universal quantum computation~\cite{lloyd1999quantum}, entanglement distillation~\cite{eisert2002distilling, fiuravsek2002gaussian}, and enhanced quantum metrology~\cite{wang2025limitations,gessner2019metrological,arvidsson2020quantum}. Recent advances in the precise control of nonlinear interactions in CV platforms—such as parametric amplification and second-harmonic generation, have opened new opportunities for the integration of non-Gaussian operations into cooling protocols. Nevertheless, the impact of the nonlinearity order on achievable cooling limits has yet to be fully characterized, leaving the potential of non-Gaussian resources to enhance cooling as an open question.

Heat-bath algorithmic cooling (HBAC) has been employed as a general framework to quantitatively study the third law in finite-dimensional quantum systems, demonstrating reachable cooling limits \cite{PhysRevLett.116.170501_PPA,PhysRevLett.123.170605}, as well as the interplay between consumed resources, such as energy \cite{PRXQuantum.4.010332,PhysRevLett.134.070401}, time \cite{PRXQuantum.4.010332}, complexity \cite{PRXQuantum.4.010332,r3jn-1z54}, and memory \cite{r3jn-1z54}. It alternates between two subroutines: \emph{recharging}, where entropy is transferred from the target system to an ancilla (referred to as the machine), and \emph{thermalizing}, where this entropy is removed from the ancilla to the reservoir. The purpose of HBAC is to achieve sub-bath cooling, i.e., to cool the system mode below the reservoir temperature. Within this framework, cooling limits have been calculated for a variety of protocols, including the partner-pairing algorithm (PPA) \cite{ppa_schulman}, Nuclear Overhauser Effect (NOE) \cite{rodriguez2017heat}, state-reset (SR) \cite{Rodríguez-Briones_2017_SR}, and extended HBAC (xHBAC) \cite{Alhambra2019xHBAC}. Remarkably, a trade-off has recently been established between the dissipated heat and the size of the machine \cite{PRXQuantum.4.010332,PhysRevLett.134.070401}, bridging the theories of Landauer and Nernst. While these fundamental limits and efficiencies have been extensively studied in finite-dimensional quantum systems, they remain much less explored in continuous-variable settings. 

In this work, we systematically study the cooling limitations of Gaussian operations and the cooling enhancement of non-Gaussian operations using the HBAC framework. Our findings can be summarized as follows.
\begin{enumerate}
    \item In Sec.~\ref{sec:LimGaussianCooling}, we establish the necessary and sufficient conditions for sub-bath cooling through multimode Gaussian operations (Theorem \ref{thm:1}). 
    \item  In Sec.~\ref{sec:EnergyEfficiencyGaussian}, 
    we prove that the most efficient recharging routine corresponds to successive $\Swap$ operations on a sequence of machines ordered by their increasing energy gaps (Theorem \ref{thm:2}). Within this ordering, each energy gap remains a free parameter, which we optimize to minimize heat dissipation during Gaussian cooling.
    \item In Sec.~\ref{sec: SingleShotNonGaussian}, we establish the necessary and sufficient condition for sub-bath cooling enabled by a specific non-Gaussian operation, the $p$-excitation exchange interaction (Theorem \ref{thm:NonGCooling}). 
    Then in Sec.~\ref{sec:NonGaussianIterative}, using the HBAC framework, we prove that this iterative non-Gaussian cooling yields a $p$-fold cooling enhancement in terms of the inverse effective temperature relative to Gaussian HBAC (Theorem \ref{thm:NonGCooling2}).
\end{enumerate}

\section{Preliminary \label{sec:Preliminary}}

In this section, we establish the theoretical framework for HBAC in continuous-variable (CV) systems.
We begin by defining the two primary classes of control in the cooling protocol: coherent control and incoherent control. The distinction between the two lies in the ability to establish coherence across energy levels. While both are driven by unitary evolution, incoherent control is constrained to energy-preserving operations, whereas coherent control allows for general unitaries.

In this work, we mainly focus on the scenario based on coherent control, and will show that sub-bath cooling is unachievable with any Gaussian HBAC protocol based on incoherent control.
Precisely, we consider a $N$-mode machine $M$ that works cyclically between a target mode $S$ and a multimode reservoir $R$ to carry the entropy from $S$ to $R$ (see Fig.~\ref{fig:Overview}).
The system and machine are described through bare Hamiltonians $H_S = \omega_0 \hat{a}^\dag \hat{a}$ and $H_M = \sum_{j=1}^N \omega_j \hat{b}_j^\dag \hat b_j$ with $\omega_1\leq\cdots\leq\omega_N$, respectively. The protocol consists of several rounds, each comprising a recharging routine, in which a unitary operator $V$ acts collectively on $S$ and $M$, and a thermalizing routine, in which $SM$ interacts with $R$ via an energy-preserving unitary $U$. 
Before the protocol starts, the state of $SMR$ is initialized to a Gibbs state $\tau_S\otimes\tau_M\otimes\tau_R$ at inverse temperature $\beta$, i.e., $\tau_X=\frac{e^{-\beta H_X}}{\tr(e^{-\beta H_X})}$, $X\in\{S,M,R\}$. Throughout the manuscript, we adopt units where $\hbar = k_B = 1$. Because the reservoir is large, it is assumed that it relaxes to the Gibbs state in every round, while $SM$ can preserve its state in the output of the thermalizing routine to the next round. After the whole process ends, the state of the machine would be reset to the Gibbs state for later use.
In the Gaussian scenario (Sec.~\ref{sec:CoolingGO}), we work on the most generic Gaussian HBAC framework, i.e., the only restriction we consider is that both $V$ and $U$ are Gaussian. It leads to the fundamental restriction on the cooling limit and efficiency imposed by Gaussianity. In the non-Gaussian scenario (Sec.~\ref{sec:CoolingNGU}), we showcase the advantage of non-Gaussianity in cooling by considering a specific protocol, where $N=1$, the recharging routine is driven by the multiexcitation exchange Hamiltonian $H_I^{(p)} \propto \hat{a}\hat{b}^{\dag p} + \hat{a}^\dag\hat{b}^p$, and the thermalizing routine is simply full thermalization of the machine mode (see Fig. 1 b).
%


As a prerequisite for evaluating our cooling protocol, we formalize the definition of temperature for CV quantum systems. This is necessary because, in general, the output CV state may deviate from the standard Gibbs state. For a general single-mode state, we introduce the definitions of thermal excitation and effective temperature (see Definition ~\ref{def:thex}). These metrics provide a consistent framework for assessing our protocol throughout the remainder of this work.


\begin{definition}\label{def:thex}
    Let $\mathcal{U}_1^G$ denote the set of single-mode Gaussian unitary operators. The thermal excitation of a single-mode state $\rho_1$ is
    \begin{equation}\label{eq:th_ex}
\ThEx(\rho_1):=\min_{W\in\mathcal{U}^G_1}\tr(\hat{a}^\dagger \hat{a} W\rho_1 W^\dagger).
    \end{equation}
    From the Bose-Einstein distribution, the corresponding effective inverse temperature is given by
    \begin{equation}
        \beta_{eff}(\rho_1):=\frac{1}{\omega}\ln\frac{\ThEx(\rho_1)+1}{\ThEx(\rho_1)}.
    \end{equation}
\end{definition}
The purpose of minimization in Eq.~\eqref{eq:th_ex} is to exclude the contribution of displacement and squeezing to the excitation number, such that the mean excitation number is caused solely by thermal noises.
For a single-mode Gaussian state $\rho^G$, $\ThEx(\rho^G)$ accounts for the thermal excitation contribution to the mean excitation number $\nb(\rho^G)=\tr(\hat{a}^\dagger\hat{a}\rho^G)$, aside from the contributions from squeezing and displacement. Moreover, $\rho^G$ can be transformed by a Gaussian unitary to the Gibbs state $\tilde \tau(\rho^G)$ with inverse temperature $\beta_{eff}(\rho^G)$, whose mean photon number is $\ThEx(\rho^G)$ (see for example, Ref.~\cite{weedbrook2012gaussian}, or Appendix~\ref{app:A1} for a brief proof). Notably, because $\tilde \tau(\rho^G)$ is related to $\rho^G$ by a Gaussian unitary, the von Neumann entropy $S(\rho^G):=-\tr(\rho^G\ln\rho^G)$ and purity $P(\rho^G):=\tr[(\rho^{G})^2]$ are also monotonic on $\ThEx(\rho^G)$. Thus, the thermal excitation $\ThEx(\rho^G)$ can also be employed as an effective measure of nonequilibrium or purity of the Gaussian state $\rho^G$.
Every non-Gaussian state $\rho^{NG}$ has a Gaussian counterpart $\tilde\rho^G(\rho^{NG})$, such that the first and second moments of $\rho^{NG}$ and $\tilde\rho^G(\rho^{NG})$ coincide. Thus, it is natural to employ $\ThEx[\tilde\rho^G(\rho^{NG})]$ to measure the thermal excitation in $\rho^{NG}$, which equals to $\ThEx(\rho^{NG})$ by Definition~\ref{def:thex} \cite{nimmrichter2017quantum}.

\section{Cooling with Gaussian operation \label{sec:CoolingGO}}

\subsection{Limitation of Gaussian cooling \label{sec:LimGaussianCooling}}
With the preliminaries established in Sec.~\ref{sec:Preliminary}, we now analyze the constraints and limitations inherent to Gaussian cooling protocols. 
The unitary operations in both recharging and thermalizing routines are restricted to be Gaussian in this section, so we first consider the ability of a multimode Gaussian unitary in minimizing the thermal excitation in the  first mode, and prove the following lemma (the proof is in Appendix~\ref{app:A2}).
\begin{lemma}\label{le:global_U}
    Consider a $J$-mode system initially in a Gaussian product state $\rho_J=\otimes_{j=1}^J\rho_j$. 
    After the action of a global Gaussian unitary, the minimum of the thermal excitation of Mode 1 in the output is
    \begin{equation}\label{eq:le1}
        \min_{W\in\mathcal{U}^G_J}\ThEx\left(\tr_{\backslash1}(W\rho_J W^\dagger)\right)
        =\min_j \ThEx(\rho_j),
    \end{equation}
    where $\mathcal{U}^G_J$ is the set of $J$-mode Gaussian unitary operators.
\end{lemma}

Applying Lemma~\ref{le:global_U} to the scheme of HBAC, we obtain Theorem~\ref{thm:1}, which specifies the necessary and sufficient condition on sub-bath cooling via Gaussian operations.
\begin{theorem}\label{thm:1}
By Gaussian HBAC protocols, the effective inverse temperature of $S$ can reach $\beta^*=\lambda\beta$ $(\lambda>1)$ if and only if $\omega_N\geq\lambda\omega_0$.
\begin{proof}
In our setting, the unitary $U$ in the thermalizing routine is both Gaussian and energy-preserving. Thus, the reservoir modes that can exchange energy with the machine should be resonant with the machine modes.
It means that the frequencies of the effective reservoir modes are no larger than $\omega_N$.

In our generic framework of HBAC, the output state $\rho^{(l+1)}_{SM}$ of $SM$ after the $l$th iteration is related to $\rho^{(l)}_{SM}$ as $\rho^{(l+1)}_{SM}=\tr_{R_{l}}[W_l(\rho^{(l)}_{SM}\otimes\tau_{R_{l}})W_l^\dagger]$, where $W_l=U_lV$, $V$ acts on $SM$, $U_l$ acts on $SMR_l$, and $\rho^{(1)}_{SM}=\tau_S\otimes\tau_M$ (See Fig.~\ref{fig:Overview} (a)). It follows that after $L$ iterations,
\begin{equation}
    \rho_S^{(L)}=\tr_{MR}[W(\tau_S\otimes\tau_M\otimes\tau_R)W^\dagger],
\end{equation}
where $\tau_R=\otimes_{l=1}^{L}\tau_{R_l}$ is a multimode reservoir composed of different Gibbs state denoted by $\tau_{R_l}$, and $W=W_L\cdots W_1$ is a Gaussian unitary. By Lemma~\ref{le:global_U}, the mean excitation number in the system mode is bounded as $\ThEx(\rho^{(L)}_S)\geq\min_j\ThEx(\tau(\omega_j))=\ThEx(\tau(\omega_N))$. It follows that the effective temperature of the system mode satisfies
    \begin{equation}
        \beta^*=\beta_{eff}(\rho^{(L)}_S)\leq\frac{\omega_N}{\omega_0}\beta.\label{eq:cooling_limit}
    \end{equation}
Therefore, for the Gaussian HBAC protocol to reach $\beta^*=\lambda\beta$, it is necessary that $\omega_N\geq\lambda\omega_0$. This proves the ``only if'' direction. Conversely, if $\omega_N\geq\lambda\omega_0$, then $\beta^*$ can be reached by setting the recharging routine as $V=\exp[-i\chi(\hat{a}^\dagger\hat{b}_N+\hat{a}\hat{b}_N^\dagger)t]$ for some proper $t$.
\end{proof}
\end{theorem}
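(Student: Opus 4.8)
The plan is to prove both implications by collapsing the entire iterative protocol into a single global Gaussian unitary and then invoking Lemma~\ref{le:global_U}. First I would observe that, since each recharging $V$ and each thermalizing $U_l$ is Gaussian and each round injects a fresh Gibbs copy $\tau_{R_l}$, the composite map producing $\rho_S^{(L)}$ can be written as a partial trace of $W(\tau_S\otimes\tau_M\otimes\tau_R)W^\dagger$ over $MR$, where $\tau_R=\otimes_l\tau_{R_l}$ and $W=W_L\cdots W_1$ is a single Gaussian unitary on the full Hilbert space of $S$, $M$, and all reservoir copies. The coherent-control assumption is exactly what licenses this rewriting, since $SM$ carries its state forward between rounds and all the Gaussian operations compose. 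This reduces the multi-round question to a one-shot question about the smallest thermal excitation a global Gaussian unitary can write into Mode $S$.

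For the necessity (``only if'') direction, the key physical input is that each $U_l$ is energy-preserving. A Gaussian energy-preserving interaction is passive (beam-splitter type rather than two-mode-squeezing type), so it couples a machine mode to a reservoir mode only when the two are resonant; hence every reservoir mode that can exchange excitations with $M$ has frequency at most $\omega_N$. Applying Lemma~\ref{le:global_U} to the product of Gibbs states on $S$, $M$, $R$, the output thermal excitation of $S$ is bounded below by the minimum of $\ThEx(\tau(\omega))$ over all participating frequencies. Since $\ThEx(\tau(\omega))=1/(e^{\beta\omega}-1)$ is strictly decreasing in $\omega$ and the largest available frequency is $\omega_N$, I obtain $\ThEx(\rho_S^{(L)})\geq\ThEx(\tau(\omega_N))$. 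Converting through the monotone relation in Definition~\ref{def:thex} then yields $\beta^*=\beta_{eff}(\rho_S^{(L)})\leq(\omega_N/\omega_0)\beta$, so reaching $\lambda\beta$ forces $\omega_N\geq\lambda\omega_0$.

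For the sufficiency (``if'') direction, I would exhibit an explicit saturating protocol. Taking a single recharging step $V=\exp[-i\chi(\hat{a}^\dagger\hat{b}_N+\hat{a}\hat{b}_N^\dagger)t]$ and tuning $\chi t$ to a full $\Swap$ transfers the Gibbs state $\tau(\omega_N)$ onto $S$; read off at frequency $\omega_0$ this state has effective inverse temperature exactly $(\omega_N/\omega_0)\beta\geq\lambda\beta$. Because $V$ is only required to be Gaussian and not energy-preserving, this cross-frequency beam splitter is admissible, and the partial-swap family parametrized by $\chi t\in[0,\pi/2]$ maps $\beta_{eff}(\rho_S)$ continuously from $\beta$ up to $(\omega_N/\omega_0)\beta$, so $\lambda\beta$ is attained at an intermediate angle.

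The step I expect to be most delicate is the passivity/resonance argument bounding the reservoir frequencies by $\omega_N$: one must confirm that no energy-preserving Gaussian coupling, including multimode ones, can let $M$ draw purity from a reservoir mode colder than $\tau(\omega_N)$, i.e.\ that squeezing-type couplings are excluded by energy conservation and that the effective reservoir spectrum relevant to cooling is capped at $\omega_N$. Once this is established, the reduction to Lemma~\ref{le:global_U} together with the monotonicity of $\ThEx(\tau(\omega))$ makes both directions immediate.
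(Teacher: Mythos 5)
Your proposal is correct and follows essentially the same route as the paper's proof: collapsing all rounds into a single global Gaussian unitary $W=W_L\cdots W_1$ acting on $\tau_S\otimes\tau_M\otimes\tau_R$, bounding the effective reservoir frequencies by $\omega_N$ via the energy-preserving (resonance) constraint on the thermalizing unitaries, invoking Lemma~\ref{le:global_U} to obtain $\ThEx(\rho_S^{(L)})\geq\ThEx(\tau(\omega_N))$ for necessity, and exhibiting the cross-frequency beam splitter $V=\exp[-i\chi(\hat{a}^\dagger\hat{b}_N+\hat{a}\hat{b}_N^\dagger)t]$ for sufficiency. Your explicit remarks on passivity of energy-preserving Gaussian couplings and on the continuous interpolation of $\beta_{eff}$ via partial swaps merely flesh out details the paper leaves implicit (``for some proper $t$'').
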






This theorem directly implies the necessity of a machine for sub-bath cooling under Gaussian operations, which is consistent with the no-go theorem \cite{PhysRevLett.124.010602} stating that cooling below the bath temperature via Gaussian thermal operations is impossible. Our results unveil further restrictions of Gaussian operations in cooling protocols.

In algorithmic cooling (also known as nonequilibrium concentration in the context of resource theories), one starts from $J$ copies of state $\rho$, applies a joint unitary $U$ on them, and aims at purifying the output state $\rho'=\tr_{\backslash 1}(U\rho^{\otimes J} U^\dagger)$ of the first system, such that $S(\rho')<S(\rho)$ or $P(\rho')>P(\rho)$. This task is achievable for $J\geq3$ if $\rho$ and $\rho'$ are qubit states~\cite{PhysRevA.110.022215}, and for $J\geq2$ if $\rho$ and $\rho'$ are $d$-dimensional ($d\geq3$) systems \cite{hsieh2025informational}. 
In contrast, Lemma~\ref{le:global_U} indicates that it is impossible to do algorithmic cooling using only Gaussian unitary operations. Precisely, the rhs of Eq.~\eqref{eq:le1} equals $\ThEx(\rho)$ because $\rho_j=\rho,\forall j$. Thus, the thermal excitation of the first mode in the output cannot be lower than the input. Consequently, it is impossible to achieve algorithmic cooling using only Gaussian unitary operations. 


As discussed in Sec.~\ref{sec:Preliminary}, HBAC with incoherent control permits only unitary operations that conserve the total energy. Consequently, the recharging unitary $V$ is constrained to be energy-preserving. In this setting, entropy removal requires two machines with distinct roles. The colder machine extracts entropy from the system, while the hotter machine absorbs the excess energy and entropy generated during cooling. Consequently, the cooling protocol involves energy-preserving interactions with two machines —one at an inverse temperature $\beta$ and the other at a higher temperature $\beta_H < \beta$. Because Gaussian interactions are generated by quadratic Hamiltonians, they are energy-preserving only when the system mode is resonant with the machine modes. Under this resonance condition, Lemma~\ref{le:global_U} implies that cooling to an inverse temperature greater than $\beta$ is impossible. This result stands in sharp contrast to the discrete-variable case \cite{PhysRevLett.123.170605}, where incoherent-control HBAC protocols can reach the same cooling limit as those based on coherent control.



In the PPA-like protocols, the machine consists of two parts: the reset part, which is refreshed to the Gibbs state in the thermalizing routine in each round, and the scratch part, which remains in the same state and thus carries forward memory. For both the qubits case (where the system is a single qubit and the machine is composed of qubits)~\cite{PhysRevLett.116.170501_PPA} and the arbitrary finite-dimensional case~\cite{PhysRevApplied.14.054005}, it is shown that the memory effect of the scratch can provide an exponential advantage over the memoryless scenario. On the contrary, Theorem~\ref{thm:1} implies that the memory effect does not improve the cooling limit in the Gaussian setting. 
From Theorem~\ref{thm:1}, when $\omega_N=\lambda \omega_0$, the inverse temperature of the system can at most reach $\beta^*=\lambda\beta$ by Gaussian HBAC protocols, which include the PPA-like protocol as a special case. 
On the other hand, this limit can be achieved in one round by simply swapping the states of $S$ and $M_N$ in the recharging routine, which is memoryless. 
Hence, our result shows that, in order to exploit the memory effect, non-Gaussianity in the operations is necessary. 

\subsection{Energy efficiency of Gaussian cooling \label{sec:EnergyEfficiencyGaussian}}

As shown in the last subsection, the cooling limit can be achieved within one round. During the recharging routine, the system and the machine interact via a Gaussian unitary $V$, and the joint state becomes $\rho_{SM}'=V(\tau_S\otimes\tau_M)V^\dagger$.
During the thermalizing routine, 
the machine is reset to the Gibbs state for later use, 
and hence an amount of heat $Q=\tr[H_M(\rho_M'-\tau_M)]$ is dissipated to the reservoir. Using the formalism in Ref.~\cite{reeb2014improved}, the dissipated heat is lower bounded as 
\begin{equation}\label{eq:landauer1}
    \beta Q\geq\tilde{\Delta}S_S,
\end{equation}
where $\tilde{\Delta}S_S=S(\tau_S)-S(\rho'_S)$ is the decrement of entropy in the target system. 
For example, when one bit of information is erased, a qubit state is transformed from $\tau_S=\frac{\iden}{2}$ to a pure state $\rho'_S=|0\rangle\langle 0|$, leading to a decrease in the qubit’s entropy of $\tilde{\Delta}S_S=\ln2$. Eq.~\eqref{eq:landauer1} then implies that the dissipated heat $Q$ is lower bounded by $k_BT\ln2$, thereby recovering the well-known Landauer's bound~\cite{landauer1961irreversibility}. Thus, Eq.~\eqref{eq:landauer1} can be viewed as a generalized form of Landauer’s bound, extending it to scenarios with arbitrary initial and final states.
Further, the difference is usually called the entropy production and can be calculated as~\cite{reeb2014improved}
\begin{equation}
    \Sigma_N(V,H_M):=\beta Q-\tilde{\Delta}S_S=D[\rho'_M||\tau_M]+I_{S:M}(\rho'_{SM}),
\end{equation}
where $I_{S:M}$ is the mutual information and $D[\cdot \| \cdot]$ is the relative entropy. A lower value of $\Sigma_N(V,H_M)$ indicates a higher efficiency of the protocol. The efficiency is maximized when $\Sigma_N(V,H_M)=0$. In this limit, all redundant heat vanishes, and the dissipated heat 
reaches its lower bound, i.e., $\beta Q = \tilde{\Delta}S_S$. In the following, we derive the optimal Gaussian unitary $V$ in the recharging routine and the optimal spectrum $\omega_1,\dots, \omega_{N-1}$ of the machine, such that $\Sigma_N(V,H_M)$ reaches the minimum for given $N$.
For an arbitrary choice of the spectrum of the machine, the following theorem identifies successive bimode swaps as the most efficient unitary in the recharging routine. The proof is given in Appendix~\ref{app:A3}.
\begin{theorem}\label{thm:2}
    For any choice of $\omega_1,\dots,\omega_{N-1}$, among all the protocols that can achieve the cooling limit $\beta^*=\frac{\omega_N}{\omega_0}\beta$, the most efficient unitary in the recharging routine is
    \begin{eqnarray}
        V^*& := & \mathrm{arg}\min_{V\in\mathcal{U}^G_{N+1}}\Sigma_N(V,H_M)\nonumber\\
        & = & \Swap(S,M_N)\circ\dots\circ\Swap(S,M_{j_0}),
    \end{eqnarray}
    where $\Swap(\cdot,\cdot)$ is the unitary which swaps the states of two modes, $j_0$ is the minimum index such that $\omega_{j_0}>\omega_0$. Besides, the minimum entropy production reads
    \begin{equation}
        \Sigma_N^*(H_M)=D[\tau(\omega_0)\| \tau(\omega_{j_0})]+\sum_{j=j_0+1}^N D[\tau(\omega_{j-1}) \| \tau(\omega_j)],\label{eq:ent_pro_star}
    \end{equation}
    where $\tau(\omega_j)=(1-e^{-\beta\omega_j})\sum_{n=0}^\infty e^{-n\beta\omega_j}|n\rangle\langle n|$ is the Gibbs state of a harmonic oscillator with frequency $\omega_j$ and inverse temperature $\beta$.
\end{theorem}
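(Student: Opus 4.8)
The plan is to first collapse the two-term entropy production into a single scalar objective and then solve a constrained energy minimization over the symplectic group. The key preliminary observation is that the cooling constraint rigidly fixes the system output: reaching $\beta^*=\frac{\omega_N}{\omega_0}\beta$ forces $\ThEx(\rho'_S)=\nb(\tau(\omega_N))$, the global minimum permitted by Lemma~\ref{le:global_U}. Since $\rho'_S$ is a single-mode Gaussian state, its von Neumann entropy $S(\rho'_S)$ depends on $\ThEx(\rho'_S)$ alone, so $\tilde\Delta S_S=S(\tau_S)-S(\rho'_S)$ is identical for every protocol saturating the bound. Writing $\Sigma_N=\beta Q-\tilde\Delta S_S$ with $Q=\tr[H_M(\rho'_M-\tau_M)]$, minimizing $\Sigma_N$ over admissible $V$ is therefore equivalent to minimizing the final machine energy $\langle H_M\rangle_{\rho'_M}=\sum_{j=1}^N\omega_j\,\nb(\rho'_{M_j})$.

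Next I would pin down the structure forced by saturation. Achieving the minimal symplectic eigenvalue in mode $S$ is the equality case of the interlacing inequality underlying Lemma~\ref{le:global_U}; I would argue that this requires $S$ to decouple from $M$, so that $I_{S:M}(\rho'_{SM})=0$, and to emerge in thermal form with occupation $\nb(\tau(\omega_N))$. Any residual single-mode squeezing on $S$ can be removed by a Gaussian unitary acting on $S$ alone, which lowers the system energy to its floor without altering the reduced states of the machine modes. Consequently the machine inherits a fixed global symplectic spectrum, i.e. the occupations $\{\nb(\tau(\omega_0)),\nb(\tau(\omega_1)),\dots,\nb(\tau(\omega_{N-1}))\}$, and the task reduces to distributing this fixed set of thermal occupations among the machine modes of frequencies $\omega_1\le\cdots\le\omega_N$.

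Then I would minimize $\sum_j\omega_j\nb(\rho'_{M_j})$ at fixed symplectic spectrum. For a fixed set of symplectic eigenvalues the minimizer is the completely passive state — the product of thermal states — because squeezing raises a mode's energy at fixed symplectic eigenvalue, while inter-mode correlations only push the local symplectic eigenvalues above the global ones. Over such product states the weighted energy is a linear function of the assignment over the Birkhoff polytope, so its minimum is attained at a permutation, and the rearrangement inequality dictates pairing the largest occupation with the smallest frequency. Since $\omega_{j_0-1}\le\omega_0<\omega_{j_0}$, this optimal assignment leaves modes $k<j_0$ untouched, places the system occupation $\nb(\tau(\omega_0))$ into mode $j_0$, and shifts $\nb(\tau(\omega_{k-1}))$ into mode $k$ for $k>j_0$ — which is exactly the cascade $V^*=\Swap(S,M_N)\circ\cdots\circ\Swap(S,M_{j_0})$.

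Finally, for this product output $I_{S:M}=0$, so $\Sigma_N^*(H_M)=D[\rho'_M\|\tau_M]$ factorizes mode by mode into $\sum_j D[\rho'_{M_j}\|\tau(\omega_j)]$, yielding the stated $D[\tau(\omega_0)\|\tau(\omega_{j_0})]+\sum_{j=j_0+1}^N D[\tau(\omega_{j-1})\|\tau(\omega_j)]$, the $k<j_0$ terms vanishing. I expect the main obstacle to be the symplectic step: rigorously excluding that active (squeezing) operations or system–machine correlations could lower the machine energy, i.e. establishing both the saturation/decoupling condition enforced by the cooling constraint and the minimum-energy property of the reverse-sorted completely passive state. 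The energy bookkeeping of the first paragraph and the combinatorial rearrangement of the third are comparatively routine once that symplectic optimization is secured.
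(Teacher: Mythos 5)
Your reduction in the first paragraph is correct, and it is essentially the same move the paper makes: since every saturating protocol has $\ThEx(\rho'_S)=\nb(\tau(\omega_N))$ and the entropy of a single-mode Gaussian state depends on its thermal excitation alone, $\tilde{\Delta}S_S$ is protocol-independent, so minimizing $\Sigma_N$ reduces to minimizing the final machine energy $\sum_j\omega_j\nb'_j$. (The paper reaches the same linear objective by noting that $S(\rho'_M)$ cancels between $D[\rho'_M\|\tau_M]$ and $I_{S:M}$, leaving $-\tr(\rho'_M\ln\tau_M)$, which is linear in the local occupations.) Your combinatorial endgame — rearrangement pairing the largest occupation with the smallest frequency, reproducing the cascade of swaps and Eq.~\eqref{eq:ent_pro_star} — is also right.

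The genuine gap sits in the middle, exactly where you located it, but it is not a technicality to be filled in later: it is the entire mathematical content of the theorem, and your route makes it harder than necessary. You rest the argument on two unproven claims: (i) saturation of the cooling bound forces $I_{S:M}(\rho'_{SM})=0$, so that the machine inherits the symplectic spectrum $\{\nb_0,\dots,\nb_{N-1}\}$; and (ii) at fixed global symplectic spectrum, $\sum_j\omega_j\nb'_j$ is minimized by the reverse-sorted product of thermal states. Claim (i) is an equality-case analysis (e.g.\ via the Minkowski determinant inequality applied to the $2\times 2$ blocks of the symplectic matrix); it is true but delicate, particularly when the minimal occupation is degenerate, and nothing in Lemma~\ref{le:global_U} as stated hands it to you. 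Claim (ii) is not easier than the theorem itself: it is essentially the weak-supermajorization statement that the paper proves as Lemma~\ref{le:gau_u} (via the Courant--Fischer min-max principle in Lemma~\ref{le:eigen_value}) and then converts into a weighted-sum bound in Lemma~\ref{le:max_K}. Crucially, the paper's proof shows that claim (i) can be bypassed altogether: Lemma~\ref{le:max_K} bounds $\sum_j\omega_j\nb'_j$ for an arbitrary, possibly correlated and squeezed, Gaussian output subject only to the constraint that the system mode reaches occupation $\nb_N$, with successive swaps saturating the bound; correlations never need to be excluded, because the objective is linear in local occupations. So completing your proof requires supplying both (i) and (ii), whereas proving the (ii)-type tail-sum inequality directly for the full $(N+1)$-mode unitary — the paper's route — renders (i) unnecessary.
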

Prior to this result for general Gaussian cooling protocols, to the best of our knowledge, the most efficient recharging unitary had been tackled analytically only for the collisional cooling for qubits \cite{PhysRevLett.134.070401} where the system is a single qubit, the machine consists of $N$ qubits, and the system qubit interacts successively with machine qubits, one at a time and each qubit only once. 

Clearly, the machine modes with $\omega_j\leq\omega_0$ do not play any role in improving the cooling limit or energy efficiency. Therefore, for the purpose of optimizing $\Sigma_N^*(H_M)$ for a given $N$, one should set $\omega_1>\omega_0$. With this, Eq.~(\ref{eq:ent_pro_star}) can be rewritten as
\begin{eqnarray}
    & &\Sigma_N^*(H_M) = \sum_{j=1}^N D[\tau(\omega_{j-1})\| \tau(\omega_j)]\nonumber\\
    & = & \sum_{j=1}^N (\bar{n}_{j-1}+1)\ln\frac{\bar{n}_j+1}{\bar{n}_{j-1}+1}+\bar{n}_{j-1}\ln\frac{\bar{n}_{j-1}}{\bar{n}_j},
\end{eqnarray}
where $\bar{n}_j:=\bar{n}(\tau(\omega_j))=\frac{1}{e^{\beta\omega_j}-1}$.
Next, we focus on optimizing the entropy production over the machine spectrum $\omega_1,\dots,\omega_{N-1},$
\begin{equation}
    \Sigma_N^{**}=\min_{\omega_1,\dots,\omega_{N-1}}\Sigma_N^*(H_M).
\end{equation}
We verify that $\frac{\partial^2\Sigma_N^*(H_S)}{\partial \bar{n}_j^2}=\frac{\nb_{j-1}}{\nb_j^2}-\frac{\nb_{j-1}+1}{(\nb_j+1)^2}+\frac{1}{\nb_j(\nb_j+1)}>0$ for $\bar{n}_{j-1}\geq \bar{n}_j$, and hence, $\Sigma_N^*(H_S)$ is a strictly convex function of $\bar{n}_1,\dots,\bar{n}_{N-1}$ for $\bar{n}_0\geq \bar{n}_1\geq \dots\geq \bar{n}_{N-1}\geq \bar{n}_N$. Therefore, the minimum of $\Sigma_N^*(H_S)$ is reached if and only if $\frac{\partial \Sigma_N^*(H_S)}{\partial \bar{n}_j}=0$, which gives
\begin{equation}
    g_{j+1}-g_{j} = (e^{g_j-g_{j-1}}-1)\cdot\frac{1-e^{-g_{j}}}{1-e^{-g_{j-1}}},\ j=1,\dots,N-1,\label{eq:opt_spec}
\end{equation}
where we define $\beta\omega_j=g_j (j=0,\dots,N)$ for convenience, and $g_0$ and $g_N$ are the boundary conditions. Eq.~(\ref{eq:opt_spec}) can be solved numerically and plotted in Fig.~\ref{fig:TL}, see also Fig.~\ref{fig:Traj} in Appendix \ref{app:A3}. 
For a finite number of machine modes $N$, we observe a substantial reduction in heat dissipation as $N$ increases.
In particular, the dissipation decreases by more than half with the introduction of just one additional machine mode ($N=2$).

\begin{figure}[htbp]
    \centering
    \includegraphics[width=0.42\textwidth]{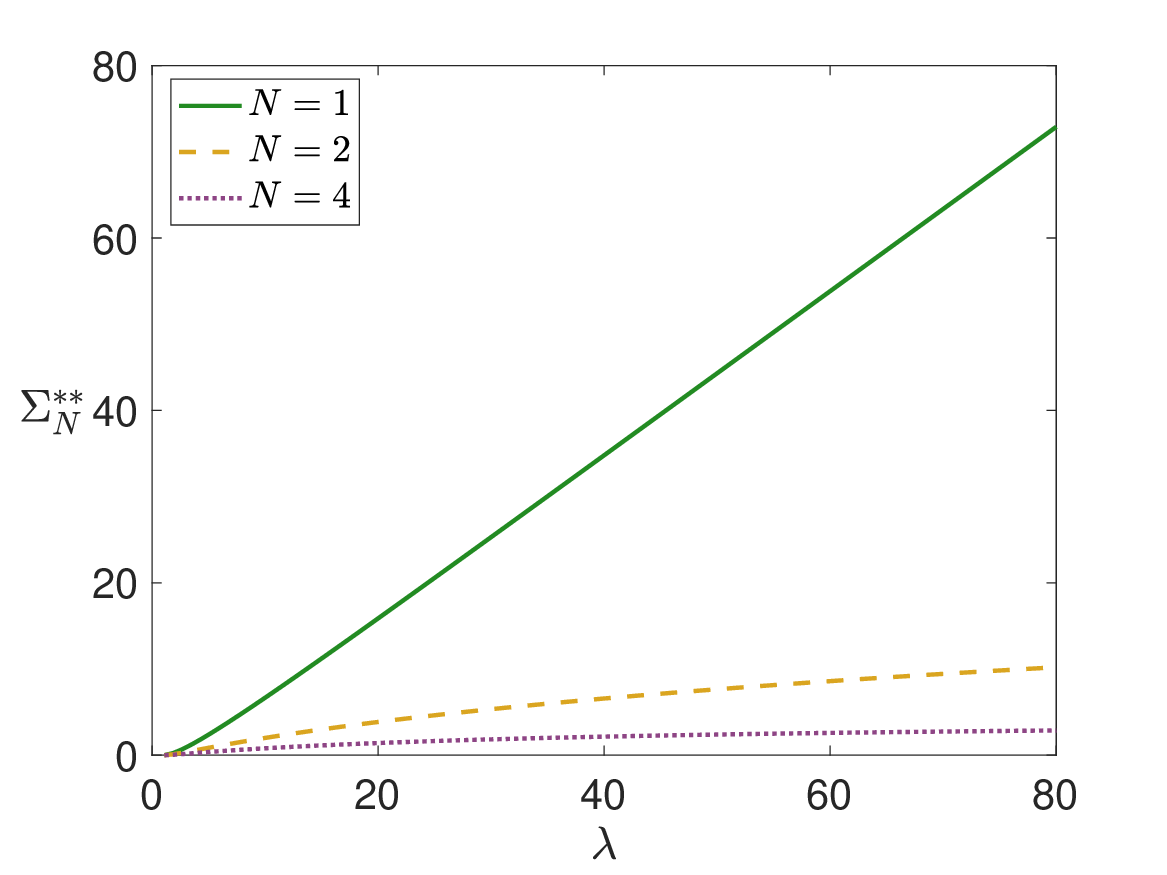}
    \caption{Plot of minimum heat dissipation $\Sigma_N^{**}$ as the function of as a function of the cooling ratio $\lambda = \beta^*/\beta$.  Here, $\lambda$ defines the target inverse temperature relative to the initial one. The initial mean excitation number is fixed at $n_0 = 10$, corresponding to the normalized effective inverse temperature $\beta \omega_0 = \ln(11/10)\approx 0.0953$. For each $\lambda$, $\Sigma_N^{**}$ is optimized over the energy gap of $N = {1, 2, 4}$ machine modes. We observe that heat dissipation increases as the cooling demand increases (larger $\lambda$) for a fixed number of machine modes. However, introducing additional machine modes significantly reduces this dissipation. }
    \label{fig:TL}
\end{figure}

%
In the limit where $N$ is large, such that $(g_j-g_{j-1})^k N^2$ is infinitesimal for $k\geq3$, Eq.~(\ref{eq:opt_spec}) can be solved analytically as
\begin{equation}
    g^*_j=2\ln\left\{-\coth\left[\frac{j}{2N}\ln\tanh\frac{g_N}{4}+\frac{N-j}{2N}\ln\tanh\frac{g_0}{4}\right]\right\}. \label{eq:analyTraj}
\end{equation}
Therefore, the optimal entropy production for large $N$ reads
\begin{equation}
    \Sigma^{**}_{N\gg1}=\frac{1}{2N}\left[\frac{1}{2}\ln\frac{\tanh(\beta\omega_N/4)}{\tanh(\beta\omega_0/4)}\right]^2+o(\frac{1}{N^2}).\label{eq:Sigma_N_large}
\end{equation}
The details of the calculation can be found in Appendix~\ref{app:A3}.

Eq.~\eqref{eq:Sigma_N_large} shows that the entropy production $\Sigma_N^{**}$ converges to zero in the limit $N\rightarrow\infty$, demonstrating that the generalized Landauer's bound in Eq.~\eqref{eq:landauer1} can be approached when the time cost diverges, which is consistent with the result in Ref.~\cite{PRXQuantum.4.010332}. Besides, Eq.~(\ref{eq:Sigma_N_large}) scales as $1/N$. 
Building on Ref.~\cite{PhysRevLett.134.070401}, which establishes $1/N$ scaling for optimal entropy production in collisional models for finite-dimensional systems, we extend this finding to the continuous-variable regime. Furthermore, since our optimization is over both the genuine multipartite Gaussian unitary operators and the spectrum of the machine, Eq.~(\ref{eq:Sigma_N_large}) actually gives the general and reachable lower bound of dissipated energy in Gaussian cooling for given $\omega_0$, $\omega_N$, and $N\gg 1$. Our result shows that the collective advantage \cite{PhysRevLett.131.210401} disappears when the Gaussian condition is imposed, complementing the result for thermodynamics length \cite{Abiuso_2022}, where the continuous time evolution of a system is studied. Our results show that Gaussian HBAC cannot surpass the $1/N$ convergence rate toward the generalized Landauer limit, thus offering no collective advantage. This limitation arises because Gaussian operations lack the strong nonlocal, long-range interactions required for collective enhancement~\cite{PhysRevLett.131.210401}. This observation also complements the findings on thermodynamic length in classical Gaussian dynamics~\cite{Abiuso_2022}.   
 
\section{Cooling with Non-Gaussian operation \label{sec:CoolingNGU}}
\subsection{Single-shot cooling \label{sec: SingleShotNonGaussian}}

In this section, we investigate the cooling enhancement offered by $p$-excitation exchange operation induced by the interacting Hamiltonian $H_I^{(p)}(\chi) = \chi (\hat{a} \hat{b}^{\dag p} + \hat{a}^\dag \hat{b}^p)$ with $p \in \{1,2,3,...\}$. The case $p=1$ corresponds to a beam-splitter-type interaction which yields the optimal recharging routine among all Gaussian operations. For $p \geq 2$, the dynamics become non-Gaussian and are realizable through multimode mixing in experiments \cite{ding2017quantum,chang2020observation}. 
%
In particular, the quantum absorption refrigeration \cite{nimmrichter2017quantum,correa2014quantum}, which relies on three-wave mixing,
has been demonstrated in trapped-ion systems \cite{maslennikov2019quantum}. A recent implementation of autonomous quantum refrigerator is also reported in superconducting qubit \cite{aamir2025thermally}.
%

Specifically, we consider a single-mode system and machine, each initialized in a Gibbs state ($\tau_S$ and $\tau_M$) at inverse temperature $\beta$. Their bare Hamiltonian is  $H_0 = \omega_0 \hat{a}^\dag \hat{a} + \omega_1 \hat{b}^\dag \hat{b}$. The system and the machine then interact through the unitary $V^{(p)}(t) = \exp(-i H^{(p)} t)$ generated by $H^{(p)} = H_0 + H_I^{(p)}(\chi)$. Obtaining a closed analytical expression for the exact total excitation under this nonlinear interaction is extremely challenging. As a result, earlier works have either focused on short-time dynamics with coherent or Fock initial states \cite{birrittella2020phase}, or relied on effective long-time equilibration toward a steady state. In our work, we follow the short-time approach but consider the more realistic case in which the initial states are Gibbs states. Now, the single-shot cooling operation is defined as 
\begin{equation}
    \mathcal{E}_t(\tau_S) = \tr_M\left[ V^{(p)}(t) \left( \tau_S \otimes \tau_M \right) V^{(p)\dag}(t)  \right ]. 
\end{equation}
In the Heisenberg picture, the mean excitation number of the system evolves as
\begin{align}
    \bar{n}_S(t) &= \tr\left[ \tau_S  \mathcal{E}_t^\dag (\hat{n}_S ) \right ]
\end{align}
where $\hat{n}_S = \hat{a}^\dag \hat{a}$ and $\mathcal{E}_t^\dag(\hat{n}_S) = \tr_M \left[ (\iden \otimes \sqrt{\tau_M}) V^{(p)\dag}(t) (\hat{n}_S \otimes \iden ) V^{(p)}(t) (\iden \otimes \sqrt{\tau_M}) \right]$. In the $\chi t\ll 1$ limit, the mean excitation number is approximated as  (see Appendix ~\ref{app:B1})
\begin{align}
    \bar{n}_{S}(t) 
    &\approx \bar{n}_S - \chi^2 t^2 \left[ (1+\bar{n}_M)^p \bar{n}_S  - \bar{n}_M^p (1+\bar{n}_S )\right] p!,  \label{eq:nst} 
\end{align}
where $\bar{n}_{S}:=\bar{n}_{S}(0)$ and  $\bar{n}_{M}:=\bar{n}_{M}(0)$ are the initial mean excitation numbers of the system and machine, respectively. After a $p$-excitation exchange, the transient state of the system is no longer in the form of a Gibbs state. Nevertheless, as we show in Appendix~\ref{app:B1}, the thermal excitation of $\rho'_S(t)=\mathcal{E}_t(\tau_S)$ equals to its mean excitation number, i.e., $\ThEx(\rho'_S(t))=\bar{n}_S(t)$. Henceforth, we employ $\bar{n}_S(t)$ to quantify the cooling performance.
If $\bar{n}_{S}(t) < \bar{n}_{S} $ for $\chi t\ll1$, we say that the system is cooled beyond its initial temperature, i.e., $\beta_{eff}(\rho'_S(t)) > \beta$. The necessary and sufficient condition for sub-bath cooling with the $p$-excitation exchange Hamiltonian is given by the following theorem.

\begin{theorem}\label{thm:NonGCooling}
Under a single-shot $p$-excitation exchange cooling operation $\mathcal{E}_t$ with $\chi t\ll1$, the necessary and sufficient condition for sub-bath cooling is $p \omega_1 >  \omega_0$.
\end{theorem}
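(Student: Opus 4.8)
The plan is to read the cooling condition directly off the leading-order formula in Eq.~\eqref{eq:nst}, which already isolates the $O(t^2)$ response of the system occupation. Since $\chi^2 t^2\, p! > 0$ for any nonzero $\chi t \ll 1$, the sign of $\bar{n}_S(t) - \bar{n}_S$ is entirely controlled by the bracketed coefficient. Hence sub-bath cooling $\bar{n}_S(t) < \bar{n}_S$ holds if and only if
\[
(1+\bar{n}_M)^p\, \bar{n}_S - \bar{n}_M^p\,(1+\bar{n}_S) > 0 .
\]
I would take this positivity statement as the single inequality to analyze; everything reduces to determining when it is satisfied in terms of the physical frequencies.

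The next step is to substitute the initial Gibbs occupations $\bar{n}_S = (e^{\beta\omega_0}-1)^{-1}$ and $\bar{n}_M = (e^{\beta\omega_1}-1)^{-1}$, together with the identity $1+\bar{n} = e^{\beta\omega}/(e^{\beta\omega}-1)$. Writing $x = e^{\beta\omega_0}$ and $y = e^{\beta\omega_1}$, the left-hand side factorizes cleanly as $\bigl[(y-1)^p(x-1)\bigr]^{-1}\,(y^p - x)$. Multiplying the inequality through by the strictly positive factor $(e^{\beta\omega_1}-1)^p (e^{\beta\omega_0}-1)$ clears all denominators and collapses the condition to
\[
e^{p\beta\omega_1} - e^{\beta\omega_0} > 0 ,
\]
which, because $\beta>0$, is equivalent to $p\omega_1 > \omega_0$. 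This delivers both the ``if'' and ``only if'' directions simultaneously: the coefficient is positive exactly when $p\omega_1 > \omega_0$, and negative (net heating) when $p\omega_1 < \omega_0$.

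The algebra here is elementary once Eq.~\eqref{eq:nst} is available, so the genuine content is conceptual rather than computational. First, I would emphasize that the equivalence is an if-and-only-if \emph{at leading order} in $\chi t$: in the marginal case $p\omega_1 = \omega_0$ the $O(t^2)$ coefficient vanishes and there is no cooling to this order, which is precisely why the statement carries a strict inequality. Second, one should confirm that the $O(t^2)$ truncation faithfully captures the short-time trend whenever the coefficient is nonzero—i.e.\ that higher-order terms cannot flip the sign inside the $\chi t \ll 1$ window—so that the perturbative criterion is the physically correct one. This is where the derivation of Eq.~\eqref{eq:nst} in Appendix~\ref{app:B1} (and the accompanying fact $\ThEx(\rho'_S(t)) = \bar{n}_S(t)$, which lets thermal excitation be tracked through the mean occupation) does the real work, and I would simply invoke it. The main obstacle is thus upstream of this proof—obtaining the clean second-order coefficient under the nonlinear, non-Gaussian evolution—rather than in the reduction to $p\omega_1 > \omega_0$ itself.
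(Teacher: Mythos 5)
Your proposal is correct and follows essentially the same route as the paper: both reduce sub-bath cooling to the positivity of the bracketed coefficient in Eq.~\eqref{eq:nst}, substitute the Gibbs occupations via $(1+\bar{n}_M)/\bar{n}_M = e^{\beta\omega_1}$, and arrive at $p\omega_1 > \omega_0$; your factorization $(e^{p\beta\omega_1}-e^{\beta\omega_0})/[(e^{\beta\omega_1}-1)^p(e^{\beta\omega_0}-1)]$ is just a cosmetic rearrangement of the paper's inequality $\bar{n}_S > \bar{n}_M^p/[(1+\bar{n}_M)^p-\bar{n}_M^p]$. Your closing remarks correctly locate the real technical content upstream, in the Appendix~\ref{app:B1} derivation of Eq.~\eqref{eq:nst} and of $\ThEx(\rho'_S(t))=\bar{n}_S(t)$, which the paper likewise invokes rather than reproves in the theorem.
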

\begin{proof}
From Eq.~\eqref{eq:nst}, the sub-bath cooling $\bar{n}_{S}(t) < \bar{n}_{S} $ is equivalent to  $\chi^2 t^2 \left[ (1+\bar{n}_M)^p \bar{n}_S  - \bar{n}_M^p (1+\bar{n}_S )\right] p! >0$, which gives 
    \begin{equation}
        \bar{n}_{S} > 
        \frac{\bar{n}_M^p}{(1+\bar{n}_M)^p-\bar{n}_M^p}.\label{eq:n_min_k}
    \end{equation}
Because $(1+\bar{n}_M)/\bar{n}_M=e^{\beta \omega_1}$, the right-hand side is equal to $\frac{1}{e^{\beta p\omega_1}-1}$. From $\bar{n}_{S}=\frac{1}{e^{\beta \omega_0}-1}$, Eq.~\eqref{eq:n_min_k} holds if and only if $p \omega_1 >  \omega_0$. 
\end{proof}
\begin{figure*}
    \centering
    \includegraphics[width=0.32\textwidth]{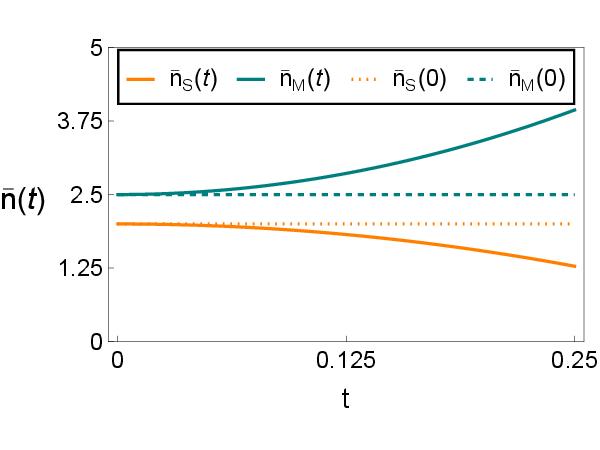}
    \includegraphics[width=0.32\textwidth]{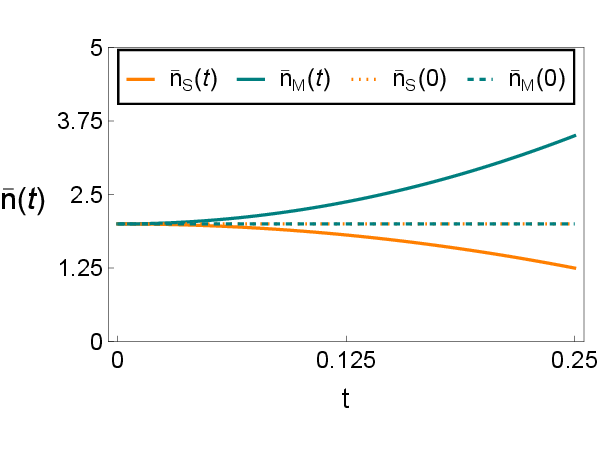}
    \includegraphics[width=0.32\textwidth]{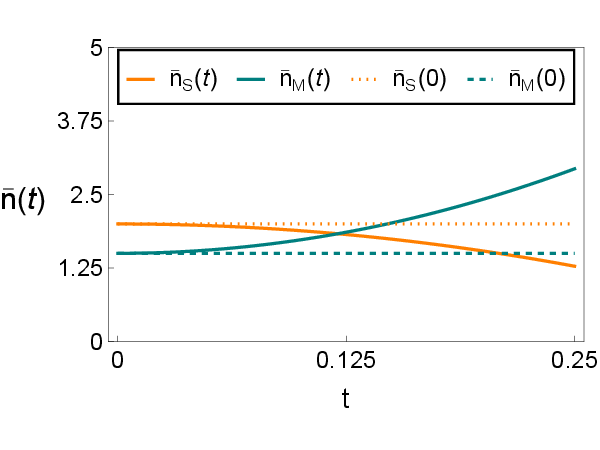}
    \caption{ Plot of ${\bar{n}_S(t), \bar{n}_M(t)}$  with initial condition (from left to right)  $\bar{n}_S < \bar{n}_{M}$,  $\bar{n}_S = \bar{n}_{M}$,  $\bar{n}_S > \bar{n}_{M}$. The orange (teal) solid line indicates $\bar{n}_S(t)$ ($\bar{n}_{M}(t)$). The dashed line indicates the initial mean excitation number where $\bar{n}_S=2$, $\bar{n}_M={1.5,2.0,2.5}$.  }
    \label{fig:NGUexamplek2}
\end{figure*}
%
We find that requirement of the machine frequency in Theorem~\ref{thm:NonGCooling}, $\omega_1 > \omega_0/p$ is less restrictive than the one in  Theorem~\ref{thm:1}, $\omega_1 > \omega_0$.
This implies that cooling with a machine whose frequency is equal to or lower than the system frequency is possible for $V^{(p)}(t)$ with $p\geq 2$. In Fig.~\ref{fig:NGUexamplek2}, we give an example of cooling with $V^{(2)}(t)$ for three initial conditions $\bar{n}_S < \bar{n}_{M}$,  $\bar{n}_S = \bar{n}_{M}$,  $\bar{n}_S > \bar{n}_{M}$. We show that sub-bath cooling is possible even in the cases of $\bar{n}_S < \bar{n}_{M}$ ($\omega_1 < \omega_0$),  $\bar{n}_S = \bar{n}_{M}$ ($\omega_1 = \omega_0$), where Gaussian cooling fails. For the case of $\bar{n}_S > \bar{n}_{M}$ ($\omega_1 > \omega_0$), the system's mean excitation drops beyond the machine's mean excitation for $\chi t\ll 1$ and 
    \begin{equation}
    t > t_c = \frac{1}{\chi} \sqrt{\frac{\bar{n}_S - \bar{n}_{M}}{ \left[ (1+\bar{n}_M)^p \bar{n}_S  - \bar{n}_M^p (1+\bar{n}_S )\right]p! }}.
\end{equation}  

The lower bound of the cooling condition in Theorem~\ref{thm:NonGCooling} corresponds to the resonance condition $\omega_0 = p \omega_1 $, under which the $p$-excitation exchange interaction $V^{(p)}$ becomes energy preserving. In this resonant regime, no cooling enhancement is possible. Thus, Theorem~\ref{thm:NonGCooling} generalizes the previously known no-cooling condition for cooling through trilinear Hamiltonian~\cite{nimmrichter2017quantum} to arbitrary $p$-excitation exchange processes. 

\subsection{Iterative cooling \label{sec:NonGaussianIterative}}
In the previous subsection, we showed that $V^{(p)}(t)$ provides a cooling advantage in the short-time regime. We now investigate the cooling behavior when this operation is applied repeatedly. Within the HBAC framework, we iterate the recharging step $V^{(p)}(t)$, followed by full thermalization of the machine for $L$ rounds (see Fig. \ref{fig:Overview} (b)). In each round, the system interacts with a refreshed machine $\tau_M$ with a fixed machine gap for the same short duration $\chi t\ll 1$. Since $\mathcal{E}_t$ is a dissipative channel, the system state converges to a steady state after sufficiently many iterations. To determine whether this steady state is thermal, we evaluate the thermal Fano factor \cite{nimmrichter2017quantum}. We find that it approaches zero in the large-$L$ limit, indicating that the steady state converges to a Gibbs state (see Appendix~\ref{app:B3}). Thus, in the asymptotic limit, the repeated action of $\mE_t$ maps any initial Gibbs state to another Gibbs state, while allowing for nonequilibrium dynamics during intermediate iterations. Since the steady state in the large-$L$ limit is thermal, its temperature can be characterized entirely by its mean excitation. Using this fact, we derive the HBAC cooling limit for the $p$-excitation exchange operation.

\begin{theorem}\label{thm:NonGCooling2}
     In the large $L$ limit, the Markovian collisional model governed by the unitary interaction $V^{(p)}(t)$ maps a Gibbs state to another Gibbs state. The asymptotic effective inverse temperature can reach $\beta^* = \frac{\omega_1}{\omega_0} p \beta$.
\end{theorem}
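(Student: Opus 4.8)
The plan is to track the system's mean excitation across iterations, locate the fixed point of the resulting recursion, and then invoke the asymptotic Gibbs property to convert that fixed point into an effective temperature. Since the machine is refreshed to $\tau_M$ at the start of every round, each round applies the same short-time map of Eq.~\eqref{eq:nst} with a fixed $\bar{n}_M$. Writing $x_l$ for the system mean excitation after $l$ rounds and $c:=\chi^2t^2p!$, the recursion reads
\begin{equation}
x_{l+1}=x_l-c\left[(1+\bar{n}_M)^p\,x_l-\bar{n}_M^p(1+x_l)\right],
\end{equation}
which I would immediately recast as an affine map $x_{l+1}=a\,x_l+b$ with $a=1-c\left[(1+\bar{n}_M)^p-\bar{n}_M^p\right]$ and $b=c\,\bar{n}_M^p$.

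Because $\chi t\ll1$ keeps $c$ small, one has $0<a<1$, so the map is a contraction and $x_l$ converges to the unique fixed point $x_\infty=b/(1-a)$. Evaluating this gives
\begin{equation}
x_\infty=\frac{\bar{n}_M^p}{(1+\bar{n}_M)^p-\bar{n}_M^p},
\end{equation}
which is precisely the cooling threshold appearing in Eq.~\eqref{eq:n_min_k}; physically, the iteration drives the system down to the point where a further short $V^{(p)}(t)$ no longer lowers $\bar{n}_S$. Using the Gibbs relation $(1+\bar{n}_M)/\bar{n}_M=e^{\beta\omega_1}$ established in the proof of Theorem~\ref{thm:NonGCooling}, this simplifies to $x_\infty=1/(e^{\beta p\omega_1}-1)$.

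At this point I would invoke the asymptotic Gibbs property: as argued in Appendix~\ref{app:B3} through the vanishing of the thermal Fano factor, the steady state of the iterated channel is a genuine Gibbs state of the system mode. Its effective inverse temperature $\beta^*$ is therefore fixed by its mean excitation through $x_\infty=1/(e^{\beta^*\omega_0}-1)$. Equating this with $1/(e^{\beta p\omega_1}-1)$ and cancelling yields $\beta^*\omega_0=\beta p\omega_1$, i.e. $\beta^*=(\omega_1/\omega_0)\,p\,\beta$, as claimed.

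The main obstacle is not the fixed-point computation, which is elementary, but justifying the two approximations underlying it. First, one must check that the short-time expansion of Eq.~\eqref{eq:nst} stays controlled when composed over $L\to\infty$ rounds; here the key point is that $t$ is held fixed and small while only the round index grows, so the per-round error is uniformly $O((\chi t)^3)$ and the machine refresh prevents accumulation of higher-order system--machine correlations. Second, and more substantively, one must establish that the steady state is actually thermal rather than merely a state with the correct mean excitation, since only for a Gibbs state does the single number $\bar{n}_S$ determine $\beta_{eff}$. This is exactly why the Fano-factor analysis of Appendix~\ref{app:B3} is a necessary prerequisite before the mean-excitation fixed point can be translated into a temperature.
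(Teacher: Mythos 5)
Your proposal is correct and follows essentially the same route as the paper: iterate the short-time mean-excitation map with a refreshed machine, obtain the affine recursion whose limit is $\bar{n}_M^p/[(1+\bar{n}_M)^p-\bar{n}_M^p]=1/(e^{p\beta\omega_1}-1)$, and then use the Fano-factor thermality result of Appendix~\ref{app:B3} to convert this into $\beta^*=(\omega_1/\omega_0)\,p\,\beta$. The only cosmetic difference is that you argue via contraction to the unique fixed point while the paper writes out the closed-form geometric solution $\bar{n}_S^{(L)}\approx\bar{n}_S^{(0)}(1-a)^L+b\,[1-(1-a)^L]/a$ before taking $L\to\infty$; incidentally, your sign convention $(1+\bar{n}_M)^p-\bar{n}_M^p$ in the decay coefficient is the correct one (the paper's main text has a sign typo there, fixed in its Appendix~\ref{app:B3}).
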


\begin{proof}
The mean excitation number after $L$ iterations for $p$-excitation exchange is given by
\begin{align}
    \bar{n}_S^{(L)} 
    &= \tr\Big[ \tau_S  \underbrace{\mathcal{E}_t^\dag \circ ... \circ \mathcal{E}_t^\dag }_{L\  \text{times} } (\hat{n}_S ) \Big ].
\end{align}
From this, we derive the closed-form expression of the mean excitation number after $L$ iterations of cooling as (see Appendix~\ref{app:B2})
\begin{equation}
    \bar{n}_S^{(L)} \approx \bar{n}_S^{(0)}(1-a)^L + b \frac{1-(1-a)^L}{a}
\end{equation}
where $a = \chi^2 t^2[(1+\bar{n}_M)^p + \bar{n}_M^p]p!$ and $b = \chi^2 t^2 \bar{n}_M^p p!$. 
In the large-$L$ limit, the mean excitation approaches
\begin{equation}
    \bar{n}_S^{(\infty)}:=\lim_{L\rightarrow\infty} \bar{n}_S^{(L)} =\frac{1}{e^{p\beta \omega_1}-1}, \label{eq:emanPhotonLB}
\end{equation}
which saturates to the minimum allowed mean excitation of the system for sub-bath cooling. Hence, it cannot be reduced even further by $p$-excitation interaction from Theorem~\ref{thm:NonGCooling}.
Besides, because the asymptotic state of the system is a Gibbs state, we have $\bar{n}_S^{(\infty)}=\frac{1}{e^{\beta^* \omega_0 }-1}$.
Equation~\eqref{eq:emanPhotonLB} then implies $\beta^* = \frac{\omega_1}{\omega_0} p \beta $.
\end{proof}

Theorem~\ref{thm:NonGCooling2} highlights that the chain of  non-Gaussian collisions $\mathcal{E}_t$ achieves the cooling limit equivalent to a single excitation Gaussian exchange with a machine possessing a $p$-fold increased energy gap (see Fig. \ref{fig:Overview} (d)). With increasing nonlinearity $p$, not only does the asymptotic mean excitation decreases exponentially, but the convergence rate toward the asymptotic mean excitation is also increased by $(p!)$-fold (see Fig. \ref{fig:IterativeCooling}).
%
For $p = 1$, we recover $\bar{n}_S^{(\infty)} = \bar{n}_M$. Since only single-excitation exchanges are permitted, the energy exchange per interaction is limited. This agrees with our previous result that the most optimal Gaussian recharging operation fails to surpass the limit $\beta^* = \frac{\omega_1}{\omega_0} \beta $.  For $p \geq 2$; however, the $p$-excitation exchange Hamiltonian allows multiple quanta to be transferred in a single interaction. This accelerates the cooling process, enabling the system temperature to converge more rapidly and decrease by a factor of $p$. Since higher $p$-nonlinearity implies stronger non-Gaussianity, the observed $p$-fold cooling enhancement emphasizes that non-Gaussianity is a key resource for surpassing the cooling limits imposed by Gaussian dynamics.

\begin{figure}
    \centering
    \includegraphics[width=1\linewidth]{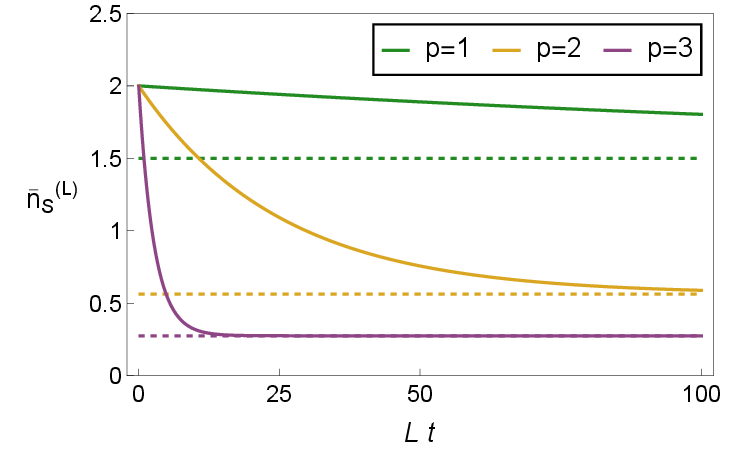}
    \caption{The time evolution of $\bar{n}_S^{(L)}:=\bar{n}(L t )$ under iterative cooling for $L=1,2,3,...,2\times 10^4$ and $t=5\times10^{-3} s$. Within each iteration, the collisional channel is governed by the $p$-excitation exchange operation. Initially, $\bar{n}_S^{(0)} = 2$, and $\bar{n}_M^{(0)}=1.5$. In the large $L$ limit, the minimum achievable mean excitation saturates to  Eq.~(\ref{eq:emanPhotonLB}), indicated by the dashed line. As $p$-nonlinearity increases, the minimum mean excitation decreases. For $p\geq 2$, the asymptotic mean excitation decreases below $\bar{n}_M^{(0)}$. }
    \label{fig:IterativeCooling}
\end{figure}

\section{Conclusion}
%
%
In this paper, we have investigated the fundamental cooling limits of multimode Gaussian operations and the cooling advantage offered by a specific non-Gaussian operation, the $p$-excitation exchange operation. Optimizing over all Gaussian protocols, we prove that the most optimal cooling strategy is successive $\Swap$ operations with machine modes of increasingly larger energy gaps.  Since $\Swap$ interactions exchange only a single excitation per round, Gaussian protocols are intrinsically limited in the control complexity required for more energy extraction during the recharging routine. Consequently, Gaussian operations cannot cool a target system when the available machine modes have gaps that are equal to or smaller than the system's. This limitation rules out nonequilibrium concentration, excludes any cooling advantage from incoherent Gaussian control, and shows that Gaussian protocols gain no benefit from memory effects. Beyond single-excitation exchange, we demonstrate that the $p$-excitation exchange yields a p-fold cooling enhancement. This improvement arises from genuine non-Gaussianity, which leverages increased control complexity to extract more energy. The experimental realization of higher-order $p$-excitation is challenging, but has been demonstrated in superconducting circuits \cite{aamir2025thermally, mundhada2019experimental} and trapped ions \cite{maslennikov2019quantum}.

%

In the future work, the cooling efficiency of the $p$-excitation exchange operation and the corresponding collective advantage remain to be established. The potential of memory effects induced by the $p$-excitation exchange to further enhance cooling will be also addressed in the future. 



\begin{acknowledgments}
The authors thank Paolo Abiuso, Pharnam Bakhshinezhad, and Paul Skrzypczyk for helpful discussions. This project is supported by the National
Research Foundation, Singapore through the National
Quantum Office, hosted in A*STAR, under its Centre for
Quantum Technologies Funding Initiative (S24Q2d0009);
by the Ministry of Education, Singapore, under the Tier
2 grant “Bayesian approach to irreversibility” (Grant
No. MOE-T2EP50123-0002); by NSFC under Grant No.
11774205, and the China Scholarship Council (award to
Xueyuan Hu for 1 year’s study abroad at the National
University of Singapore).
\end{acknowledgments}

\bibliography{apssamp}
\pagebreak
\onecolumngrid
\appendix
\section{Appendices for Gaussian cooling}
\subsection{Preliminaries: Gaussian states and Gaussian unitary operations}\label{app:A1}
Consider a $J$-mode bosonic system, whose Hamiltonian is written as $H=\sum_{j=1}^J\omega_j\hat{a}_j^\dagger \hat{a}_j$, where $\hat{a}_j$, $\hat{a}_j^\dagger$, and $\omega_j$ are the annihilation and creation operators, and the frequency of the $j$th mode.
A Gaussian state of this system is fully characterized by its first and second moments defined as $\FirM_{j} = \FirM_{j+J}^{*} = \langle \hat{a}_{j} \rangle$ for $j = 1,\cdots,J$ and $\SecM_{j,k} = \SecM_{k+J,j+J}= \frac{1}{2}\langle \{\hat{a}_{j},\hat{a}_{k}^{\dagger}\} \rangle - \langle \hat{a}_{j}\rangle \langle \hat{a}_{k}^{\dagger}\rangle $, $\SecM_{j,k+J} = \SecM_{j+J,k}^{*} = \frac{1}{2}\langle \{\hat{a}_{j},\hat{a}_{k}\} \rangle - \langle \hat{a}_{j}\rangle \langle \hat{a}_{k}\rangle $ for $j,k = 1,\cdots,J$, respectively.
Accordingly, the first moment vector $\FirM$ and the second moment matrix $\SecM$ can be written as 
\begin{eqnarray}
    \FirM=\begin{pmatrix}
        \vec\alpha^* \\ \vec\alpha
    \end{pmatrix},\ \SecM = \begin{pmatrix}
        \mu^* & \nu \\ \nu^* & \mu
    \end{pmatrix},
\end{eqnarray}
where $\vec\alpha$ is a $J$-dimensional vector and $\mu$ and $\nu$ are $J\times J$-dimensional matrices that are Hermitian and symmetric, respectively. 
Moreover, the second moment must obey the uncertainty principle, which corresponds to $\SecM+\mathbf{Z}\geq \mathbf{0}$ with a $2J\times 2J$ matrix $\mathbf{Z}:= \frac{1}{2}\begin{pmatrix}
	\mathbb{I} & \mathbf{0}\\ \mathbf{0} & -\mathbb{I}
\end{pmatrix}$, where $\mathbb{I}$ is the identity matrix. With this, the mean excitation number $\nb_j:=\langle \hat{a}_j^\dagger\hat{a}_j\rangle$ can be written as
\begin{equation}
    \nb_j=|\alpha_j|^2+\mu_{jj}-\frac{1}{2}.
\end{equation}
As an example, a single-mode Gibbs state with mean excitation number $\bar n$ can be described by $\FirM=0$, $\mu=\bar{n}+\frac{1}{2}$, and $\nu=0$.

In this representation, a $J$-mode Gaussian unitary operations is described by a $2J$-dimensional vector $\mathbf{d}=\begin{pmatrix}
    \vec{\alpha}_\mathbf{d}^* & \vec{\alpha}_\mathbf{d}
\end{pmatrix}^\tp$ and a $2J\times 2J$ unitary matrix $\GU=\begin{pmatrix}
    \CU^* & \SU \\
    \SU^* & \CU
\end{pmatrix}$ satisfying $\GU\mmZ \GU^\dagger=\mmZ$, or equivalently,
\begin{equation}\label{eq:gau_u_con}
    \CU \CU^\dagger-(\SU\SU^\dagger)^* = \iden,\ (\SU\CU^\dagger)^\tp=\SU\CU^\dagger.
\end{equation}
which acts on the first and second moments as $\FirM\rightarrow \GU\FirM+\mathbf{d}$ and $\SecM\rightarrow\GU\SecM \GU^\dagger$. It is directly checked that $|\det(\GU)|=1$ and further $\det(\GU\SecM \GU^\dagger)=\det(\SecM)$, i.e., the determinant of the second moment does not change under the action of Gaussian unitary operations.

In the following, we will use the tuple $(\mathbf{d},\CU,\SU)$ to represent a multimode Gaussian unitary operation. For example, a displacement is described by $(\mathbf{d},\GU=\iden)$; a passive unitary, which is composed of phase-shifters and beam-splitters, is described by $\mathbf{d}=0$, $\SU=0$, and a unitary matrix $\CU$; a phase shifter further requires $\CU=\mathrm{diag}(e^{-i\phi_1},\dots,e^{-i\phi_J})$. For single-mode squeezing acting on each mode, $\mathbf{d}=0$, $\CU=\mathbf{\Lambda_C}=\mathrm{diag}(\cosh r_1,\dots,\cosh r_J)$, and $\SU=\mathbf{\Lambda_S}=\mathrm{diag}(\sinh r_1,\dots,\sinh r_J)$. With this, some known results can be inferred directly. For example, Eq.~(\ref{eq:gau_u_con}) implies that the singular value decompositions of $\CU=\mathbf{W}_\CU\mathbf{\Lambda}_\CU\mathbf{V}_\CU^\dagger$ and $\SU=\mathbf{W}_\SU\mathbf{\Lambda}_\SU\mathbf{V}_\SU^\dagger$ should satisfy $\mathbf{W}_\CU=\mathbf{W}_\SU^*$, $\mathbf{V}_\CU=\mathbf{V}_\SU$, and $\mathbf{\Lambda}_\CU^2-\mathbf{\Lambda}_\SU^2=\iden$. It follows that any $\GU$ in a Gaussian unitary can be decomposed as $\GU=\GU_{\mathrm{ps}_2}\GU_{\mathrm{sq}}\GU_{\mathrm{ps}_1}$, where $\GU_{\mathrm{sq}}=\begin{pmatrix}
    \mathbf{\Lambda}_\CU & \mathbf{\Lambda}_\SU \\
    \mathbf{\Lambda}_\SU & \mathbf{\Lambda}_\CU
\end{pmatrix}$ describes a single-mode squeezing, $\GU_{\mathrm{ps}_1}=\mathbf{V}_\CU^{\dagger*}\oplus \mathbf{V}_\CU^\dagger$ and $\GU_{\mathrm{ps}_2}=\mathbf{W}_\CU^{*}\oplus \mathbf{W}_\CU$ are passive unitary operators. This immediately leads to the well-known result that any Gaussian unitary is decomposable as passive unitaries, single-mode squeezings, and displacements.

Besides, consider a single-mode Gaussian state $\rho^G$ with first and second moments $(\alpha, \mu,\nu)$. It can be transformed to a Gibbs state $\tau(\rho^G)$ with $\mu'=\sqrt{\mu^2-|\nu|^2}=\sqrt{\det(\SecM)}$ by a single-mode Gaussian unitary composed of a displacement by $-\alpha$, a phase shifter with $\phi=-\frac{1}{2}\arg(\nu)$, and a squeezing with $r=-\frac{1}{2}\arg\tanh\frac{|\nu|}{\mu}$. It follows that $\nb(\rho^G) = |\alpha|+\mu-\frac{1}{2}\geq \sqrt{\det(\SecM)}-\frac{1}{2}=\nb[\tau(\rho^G)]$. Hence, the thermal excitation in $\rho^G$ is $\ThEx(\rho^G)=\nb[\tau(\rho^G)]$ from the definition.

The representation based on $\{\hat{a}_j,\hat{a}_j^\dagger\}$ described above will be employed to simplify the proofs of our results. It is equivalent to the more widely used representation based on $\{\hat{x}_j=\frac{1}{\sqrt{2}}(\hat{a}_j+\hat{a}_j^\dagger),\hat{p}_j=\frac{1}{i\sqrt{2}}(\hat{a}_j-\hat{a}_j^\dagger)\}$ by a simple transformation. 

Next we will focus on the ability of multimode Gaussian unitary operations in manipulating the mean number of excitations in each mode. Lemma~\ref{le:eigen_value} below will be used in our proof of Lemma~\ref{le:gau_u}.
\begin{lemma}\label{le:eigen_value}
    Let $\LU$ be a $J\times J$ matrix satisfying $\LU\LU^\dagger\geq\iden$, and $O$ be a completely positive semidefinite matrix of the same dimension. The eigenvalues of $O$ in the descending order are labeled as $o_j^\downarrow$, while the eigenvalues of $\LU O\LU^\dagger$ in the descending order are labeled as $o_j^{\prime\downarrow}$. Then for all $j$, the following relation holds
    \begin{equation}
        o_j^{\prime\downarrow}\geq o_j^\downarrow.
    \end{equation}
    Let $\{o''_j\}$ be the diagonal elements of $\LU O\LU^\dagger$, and then
    \begin{equation}\label{eq:diag_eigen}
        \sum_{j=k}^J o_j''\geq\sum_{j=k}^J o_j^\downarrow,\ k=1,\dots,J.
    \end{equation}
    \begin{proof}
        By the Courant–Fischer–Weyl min-max principle, the eigenvalues $\lambda_j^\downarrow$ (where $\downarrow$ indicates the descending order) of a Hermitian matrix $Q$ can be calculated as
        \begin{eqnarray}
            \lambda_j^\downarrow=\max_{\begin{subarray}{c}
                \mathcal{M}\in\mathcal{H} \\
                 \mathrm{dim}(\mathcal{M})=j
            \end{subarray}}\min_{\begin{subarray}{c}
                |x\rangle\in\mathcal{M} \\
                 \langle x|x\rangle=1
            \end{subarray}} \langle x|Q|x\rangle,
        \end{eqnarray}
        where $\mathcal{H}$ is the Hilbert space on which $Q$ acts. 
        
From $\LU\LU^\dagger\geq\iden$, we have $\|\LU^\dagger|x\rangle\|^2=\langle x|\LU\LU^\dagger|x\rangle\geq 1$ for any normalized vector $|x\rangle$. Defining $|x_\LU\rangle=\LU^\dagger|x\rangle/\|\LU^\dagger|x\rangle\|$, we then obtain
\begin{equation}\label{eq:norm_xL}
    \langle x|\LU O\LU^\dagger|x\rangle=\langle x_\LU |O|x_\LU\rangle\cdot\|\LU^\dagger|x\rangle\|^2\geq \langle x_\LU |O|x_\LU\rangle.
\end{equation}
Besides, $\LU$ is full-rank, so for any $j$-dimensional space $\mathcal{M}$, $\mathcal{M}_\LU \equiv\{ \LU^\dagger|x\rangle\big| |x\rangle\in \mathcal{M}\}$ is also $j$-dimensional. It follows that for any normalized vector $|x\rangle\in\mathcal{M}$, there is a normalized vector $|x_\LU\rangle\in\mathcal{M}_\LU$, such that Eq.~(\ref{eq:norm_xL}) holds. Therefore, for any $j$-dimensional space $\mathcal{M}$, there is a $j$-dimensional space $\mathcal{M}_\LU$ such that
\begin{equation}
    \min_{\begin{subarray}{c}
                |x\rangle\in\mathcal{M} \\
                 \langle x|x\rangle
                 \end{subarray}} \langle x|\LU O\LU^\dagger|x\rangle \geq \min_{\begin{subarray}{c}
                |x_\LU\rangle\in\mathcal{M}_\LU \\
                 \langle x_\LU|x_\LU\rangle
                 \end{subarray}} \langle x_\LU| O|x_\LU\rangle.
\end{equation}
It follows immediately that
\begin{equation}
    o_j^{\prime\downarrow}=\max_{\begin{subarray}{c}
                \mathcal{M}\in\mathcal{H} \\
                 \mathrm{dim}(\mathcal{M})=j
            \end{subarray}} \min_{\begin{subarray}{c}
                |x\rangle\in\mathcal{M} \\
                 \langle x|x\rangle
                 \end{subarray}} \langle x|\LU O\LU^\dagger|x\rangle 
                 \geq \max_{\begin{subarray}{c}
                \mathcal{M}_\LU\in\mathcal{H} \\
                 \mathrm{dim}(\mathcal{M}_\LU)=j
            \end{subarray}} \min_{\begin{subarray}{c}
                |x_\LU\rangle\in\mathcal{M}_\LU \\
                 \langle x_\LU|x_\LU\rangle
                 \end{subarray}} \langle x_\LU| O|x_\LU\rangle = o_j^\downarrow.
\end{equation}
Further, Eq.~(\ref{eq:diag_eigen}) is obtained by recalling that the diagonal elements $\{o''_j\}$ and the eigenvalues $\{o^{\prime\downarrow}_j\}$ of $\LU O\LU^\dagger$ are related as $\sum_{j=k}^J o''_j\geq \sum_{j=k}^Jo^{\prime\downarrow}_j$.

    \end{proof}
\end{lemma}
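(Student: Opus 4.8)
The plan is to prove the two claims in sequence, using the variational (Courant--Fischer--Weyl) characterization of eigenvalues for the first and a Schur--Horn majorization argument for the second. The structural input I would exploit throughout is that $\LU\LU^\dagger\geq\iden$ forces $\LU^\dagger$ to be norm non-decreasing and invertible.

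First I would establish the pointwise inequality $o_j^{\prime\downarrow}\geq o_j^\downarrow$. Both sets of eigenvalues admit the max--min representation
\[
\lambda_j^\downarrow(Q)=\max_{\dim\mathcal{M}=j}\ \min_{\substack{|x\rangle\in\mathcal{M}\\ \langle x|x\rangle=1}}\langle x|Q|x\rangle .
\]
The key observation is that for any unit vector $|x\rangle$ one has $\|\LU^\dagger|x\rangle\|^2=\langle x|\LU\LU^\dagger|x\rangle\geq1$ by hypothesis. Writing $|x_\LU\rangle:=\LU^\dagger|x\rangle/\|\LU^\dagger|x\rangle\|$ and using $O\geq\mathbf{0}$, this yields $\langle x|\LU O\LU^\dagger|x\rangle=\|\LU^\dagger|x\rangle\|^2\,\langle x_\LU|O|x_\LU\rangle\geq\langle x_\LU|O|x_\LU\rangle$. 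Because $\LU$ is full rank, the map $|x\rangle\mapsto\LU^\dagger|x\rangle$ sends each $j$-dimensional subspace $\mathcal{M}$ bijectively onto a $j$-dimensional subspace $\mathcal{M}_\LU$; hence the inner minimum of $\langle x|\LU O\LU^\dagger|x\rangle$ over $\mathcal{M}$ is bounded below by the minimum of $\langle x_\LU|O|x_\LU\rangle$ over $\mathcal{M}_\LU$, and taking the maximum over all $j$-dimensional subspaces gives $o_j^{\prime\downarrow}\geq o_j^\downarrow$.

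For the second claim I would invoke Schur--Horn majorization: for any Hermitian matrix the diagonal vector is majorized by the eigenvalue vector, so the sum of any $J-k+1$ diagonal entries is at least the sum of the $J-k+1$ smallest eigenvalues. Applied to $\LU O\LU^\dagger$, this gives $\sum_{j=k}^J o_j''\geq\sum_{j=k}^J o_j^{\prime\downarrow}$ for every $k$. Combining with the pointwise bound $o_j^{\prime\downarrow}\geq o_j^\downarrow$ just established, one obtains $\sum_{j=k}^J o_j''\geq\sum_{j=k}^J o_j^{\prime\downarrow}\geq\sum_{j=k}^J o_j^\downarrow$, which is Eq.~(\ref{eq:diag_eigen}).

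The main obstacle is the first claim, specifically the bookkeeping of the subspace correspondence: one must verify that $\LU^\dagger$ genuinely preserves dimension (guaranteed since $\LU\LU^\dagger\geq\iden$ forces $\det\LU\neq0$) and that the norm factor $\|\LU^\dagger|x\rangle\|^2\geq1$ can only increase the quadratic form, which uses $O\geq\mathbf{0}$ in an essential way. Once these are in place the majorization step is routine.
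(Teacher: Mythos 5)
Your proposal is correct and follows essentially the same route as the paper's own proof: the Courant--Fischer--Weyl max--min argument with the renormalized vectors $|x_\LU\rangle=\LU^\dagger|x\rangle/\|\LU^\dagger|x\rangle\|$ and the dimension-preserving subspace map for the pointwise eigenvalue bound, then the diagonal-versus-eigenvalue majorization (which the paper invokes implicitly and you name as Schur--Horn) chained with that bound to get Eq.~(\ref{eq:diag_eigen}). No gaps to report.
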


\begin{lemma}\label{le:gau_u}
    Consider a system consisting of $J$ independent harmonic oscillators with frequency $\omega_1,\dots,\omega_J$ and initially in the Gibbs state at inverse temperature $\beta$. The mean excitation number of the $j$th mode is $\bar{n}_j=\frac{1}{e^{\beta\omega_j}-1}$. Under any Gaussian unitary $(\mathbf{d},\CU,\SU)$, the mean excitation number $\bar{n}_l'$ in the output satisfies
    \begin{equation}
        \sum_{j=k}^{J}\bar{n}_j'\geq\sum_{j=k}^J \bar{n}_j^\downarrow,\forall k=1,\dots,J,\label{eq:le_gau_U}
    \end{equation}
    where $\{\bar{n}_j^\downarrow\}$ is $\{\bar{n}_j\}$ in the descending order.
    The equalities are saturated for all $k$ when $\mathbf{d}=0$, $\SU=0$ and $\CU$ is a permutation which puts $\bar{n}_j$ in the descending order.
\end{lemma}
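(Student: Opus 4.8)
The plan is to track the action of the Gaussian unitary on the first and second moments and reduce the claim to Lemma~\ref{le:eigen_value}. Since the initial state is a product of Gibbs states, its first moment vanishes, $\FirM=0$, and its second moment is block diagonal with $\nu=0$ and $\mu=\bar N+\tfrac12\iden$, where $\bar N=\mathrm{diag}(\nb_1,\dots,\nb_J)$. Under $(\mathbf{d},\CU,\SU)$ the first moment becomes $\mathbf{d}$, contributing $|\alpha_j|^2\ge 0$ to each output mean excitation, while the second moment becomes $\GU\SecM\GU^\dagger$. First I would compute the lower-right block $\mu'$ of $\GU\SecM\GU^\dagger$, finding $\mu'=\CU\mu\CU^\dagger+\SU^*\mu\SU^{\tp}$, so that $\nb_j'=|\alpha_j|^2+\mu'_{jj}-\tfrac12$.

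Next I would simplify $\mu'-\tfrac12\iden$ using the Gaussian constraint Eq.~(\ref{eq:gau_u_con}), which in this notation reads $\CU\CU^\dagger-\SU^*\SU^{\tp}=\iden$. Substituting $\mu=\bar N+\tfrac12\iden$ and using this identity to absorb the $\tfrac12\iden$ pieces gives $\mu'-\tfrac12\iden=\CU\bar N\CU^\dagger+\SU^*(\bar N+\iden)\SU^{\tp}$. Because $\bar N+\iden\ge 0$, the squeezing term $\SU^*(\bar N+\iden)\SU^{\tp}$ is positive semi-definite, and so is the displacement contribution; discarding both can only lower each diagonal entry, yielding $\nb_j'\ge(\CU\bar N\CU^\dagger)_{jj}$ for every $j$, hence the same inequality for every tail sum. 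The key structural observation is that the same constraint forces $\CU\CU^\dagger=\iden+\SU^*\SU^{\tp}\ge\iden$, which is precisely the hypothesis $\LU\LU^\dagger\ge\iden$ of Lemma~\ref{le:eigen_value}.

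I would then invoke Lemma~\ref{le:eigen_value} with $\LU=\CU$ and $O=\bar N$: since the eigenvalues of $\bar N$ are the $\nb_j$, its conclusion Eq.~(\ref{eq:diag_eigen}) gives $\sum_{j=k}^J(\CU\bar N\CU^\dagger)_{jj}\ge\sum_{j=k}^J\nb_j^\downarrow$, and chaining with the previous bound establishes $\sum_{j=k}^J\nb_j'\ge\sum_{j=k}^J\nb_j^\downarrow$ for all $k$. For saturation, setting $\mathbf{d}=0$ kills the displacement term and $\SU=0$ removes the squeezing term while reducing the constraint to $\CU\CU^\dagger=\iden$; taking $\CU=P$ to be the permutation sorting the $\nb_j$ into descending order then makes $\CU\bar N\CU^\dagger=P\bar N P^{\tp}$ diagonal with entries exactly $\nb_j^\downarrow$, so $\nb_j'=\nb_j^\downarrow$ and equality holds for every $k$. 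I expect the main obstacle to be the second paragraph: recognizing that the squeezing and displacement pieces are genuine energy-adding (positive semi-definite) contributions that may be discarded, and that the leftover congruence by $\CU$ automatically satisfies $\CU\CU^\dagger\ge\iden$ by the symplectic constraint—this is exactly the matching needed to apply Lemma~\ref{le:eigen_value} and makes the whole reduction go through.
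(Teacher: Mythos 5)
Your proposal is correct and follows essentially the same route as the paper: both discard the displacement and squeezing contributions as manifestly nonnegative and then invoke Lemma~\ref{le:eigen_value} with $\LU=\CU$ and $\CU\CU^\dagger\geq\iden$ from the symplectic constraint, with the identical saturation argument. The only (cosmetic) difference is that you first absorb the $\tfrac12\iden$ offsets via the constraint to write $\mu'-\tfrac12\iden=\CU\bar N\CU^\dagger+\SU^*(\bar N+\iden)\SU^{\tp}$ and apply the lemma to $O=\bar N$, whereas the paper applies it directly to $O=\mu$ and subtracts the $\tfrac12$'s at the end.
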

This lemma is related to Lemma 1 in Ref.~\cite{PhysRevA.73.012330}, where the relation between the diagonal elements and the symplectic eigenvalues of a covariant matrix has been demonstrated. Here we are more interested in the conversion of the mean excitation numbers of each mode under the action of a multimode Gaussian unitary operation. For completeness, we provide our comparably simpler proof.
\begin{proof}
    Because the system is initially in the Gibbs state at inverse temperature $\beta$, the second moment of the $J$-mode initial state is in the following form
\begin{equation}
    \mu=\mathrm{diag}(\mu_1,\dots,\mu_J),\nu=0,
\end{equation}
where $\mu_j=\bar{n}_j+\frac{1}{2}$, and $\bar n_j=\frac{1}{e^{\beta\omega_j}-1}$ is the mean excitation number of the $j$th mode. 
After a joint Gaussian unitary $\GU$, the second moment becomes
\begin{eqnarray}
    \mu' & = & \CU\mu \CU^\dagger+\SU^*\mu \SU^{*\dagger},\\
    \nu' & = & \CU^*\mu \SU^{*\dagger}+\SU\mu \CU^\dagger.
\end{eqnarray}
The second moment of the $j$-th mode is then calculated as
\begin{eqnarray}
    \mu_j' & = & \sum_{j'} (|\CU_{jj'}|^2+|\SU_{jj'}|^2)\mu_{j'},\\
    \nu_j' & = & 2\sum_{j'} \CU_{jj'}^*\SU_{jj'}\mu_{j'}.
\end{eqnarray}
It follows that for all $k=1,\dots,J$,
\begin{eqnarray}
    \sum_{j=k}^{J}\mu_j'=\sum_{j=k}^{J}\sum_{j'=1}^J (|\CU_{jj'}|^2+|\SU_{jj'}|^2)\mu_{j'}
    \geq \sum_{j=k}^{J}\sum_{j'=1}^J |\CU_{jj'}|^2\mu_{j'}=\sum_{j=k}^J\mu_j'',
\end{eqnarray}
where $\{\mu_j''\}$ are the diagonal elements of the matrix $\CU\mu \CU^\dagger$. 
From Eq.~(\ref{eq:gau_u_con}), $\CU\CU^\dagger\geq\iden$.
It follows from Lemma~\ref{le:eigen_value} that
$\sum_{j=k}^{J}\mu_j''\geq \sum_{j=k}^J \mu_j^\downarrow$ for all $k=1,\dots,J$, and then $\sum_{j=k}^{J}\mu_j'\geq \sum_{j=k}^J \mu_j^\downarrow$. Recalling that $\bar n_j'\geq\mu_j'-\frac{1}{2}$ is the mean excitation number of the $j$th mode in the output, and $\nb_j^\downarrow=\mu_j^\downarrow-\frac{1}{2}$ is the mean excitation number in the input, we arrive at Eq.~\ref{eq:le_gau_U}.
It is noted that the left-hand side means the summation of the mean excitation numbers of arbitrary $k$ modes in the output.
\end{proof}

\subsection{The cooling limit of the Gaussian HBAC}\label{app:A2}
Two consequences of Lemma~\ref{le:gau_u} are observed. The first is Lemma~\ref{le:global_U} in the main text, which we restated below.

\begin{lemma}\label{le:global_U_app}
    Consider a $J$-mode system initially in a Gaussian product state $\rho_J=\otimes_{j=1}^J\rho_j$. 
    After the action of a global unitary, the minimum of the thermal excitation of Mode 1 in the output is
    \begin{equation}
        \min_{W\in\mathcal{U}^G_J}\ThEx\left(\tr_{\backslash1}(W\rho_J W^\dagger)\right)
        =\min_j \ThEx(\rho_j),
    \end{equation}
    where $\mathcal{U}^G_J$ is the set of $J$-mode Gaussian unitary operators.
\begin{proof}
From $\rho_j=W_j\tau(\rho_j)W_j^\dagger$, $\ThEx(\rho_j)=\bar{n}[\tau(\rho_j)]$, and the definition of the thermal excitation, we have for all $J$-mode Gaussian unitary $W$, it holds that
    \begin{eqnarray}
        \ThEx\left(\tr_{\backslash1}(W\rho_J W^\dagger)\right) & = & \min_{\tilde W_1\in\mathcal{U}^G_1}\nb\left(\tilde W _1\tr_{\backslash1}( W(\otimes_{j=1}^J W_j)( \otimes_{j=1}^J\tau(\rho_j))( \otimes_{j=1}^JW_j^\dagger) W^\dagger)\tilde W_1^\dagger\right) \nonumber\\
        & = & \nb \left[\tr_{\backslash1}\left(\bar{W}(\otimes_{j=1}^J\tau(\rho_j))\bar{W}^\dagger\right)\right] \geq \min_j\nb[\tau(\rho_j)]=\min_j\ThEx(\rho_j),
    \end{eqnarray}
    where $\bar{W}=\tilde{\bar W}_1 W (\otimes_{j=1}^J W_j)$ and $\tilde{\bar W}_1$ is the single-mode Gaussian unitary that saturate the minimization in the first line. The equality is saturated by choosing $W$ to be a swap between the first mode and the mode with the minimum thermal excitation. This completes the proof.
\end{proof}
\end{lemma}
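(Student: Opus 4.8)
The plan is to reduce the general statement, in which each input mode is an arbitrary single-mode Gaussian state, to the purely thermal case that is already controlled by Lemma~\ref{le:gau_u}. The key leverage is that, by Definition~\ref{def:thex}, the thermal excitation is invariant under single-mode Gaussian unitaries and is itself defined as a minimum of the mean excitation number. Combined with the single-mode normal-form fact recalled in Appendix~\ref{app:A1} — namely that every single-mode Gaussian state factorizes as $\rho_j = W_j\,\tau(\rho_j)\,W_j^\dagger$ for some $W_j\in\mathcal{U}^G_1$ with $\bar n[\tau(\rho_j)]=\ThEx(\rho_j)$ — this allows me to replace each $\rho_j$ by its Gibbs normal form, paying only the price of absorbing the local unitaries $W_j$ into the global operation.

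First I would write $\rho_J = \big(\otimes_{j} W_j\big)\big(\otimes_{j}\tau(\rho_j)\big)\big(\otimes_{j} W_j^\dagger\big)$ and unfold the definition of $\ThEx$ applied to the reduced output state of Mode~1. This introduces an additional single-mode minimizing unitary $\tilde W_1$ acting on Mode~1. Since a composition of Gaussian unitaries is again Gaussian, I can collect $\tilde W_1$, the global unitary $W$, and the product $\otimes_{j} W_j$ into a single $J$-mode Gaussian unitary $\bar W$. Then $\ThEx\!\left(\tr_{\backslash1}(W\rho_J W^\dagger)\right)$ becomes exactly the mean excitation number of Mode~1 after $\bar W$ acts on the product of Gibbs states $\otimes_{j}\tau(\rho_j)$, which is the precise setting of Lemma~\ref{le:gau_u}.

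At this point I would invoke Lemma~\ref{le:gau_u} in the single-mode instance $k=J$: the output mean excitation of any one mode is bounded below by the smallest input mean excitation, so $\bar n_1' \ge \min_j \bar n[\tau(\rho_j)] = \min_j \ThEx(\rho_j)$, and minimizing over all global $W$ preserves this lower bound. For the matching upper bound I would exhibit an explicit saturating choice: take $W$ to be the swap of Mode~1 with the mode $j^*$ carrying the smallest thermal excitation (a permutation is passive, hence Gaussian), after which the reduced state of Mode~1 is $\rho_{j^*}$ and its thermal excitation equals $\min_j\ThEx(\rho_j)$. The two bounds together yield the claimed equality.

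The main obstacle is not any single computation but the bookkeeping of the reduction: one must verify that the minimization implicit in the output thermal excitation merges cleanly with the input normal-form unitaries into one Gaussian operation, so that the problem genuinely collapses to the product-of-Gibbs-states scenario. All of the quantitative content — that Gaussian unitaries cannot concentrate thermal excitation below the minimum over the inputs — is carried by the majorization-type inequality of Lemma~\ref{le:gau_u}, which I am free to assume here; everything else is reduction and an achievability witness.
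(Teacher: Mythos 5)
Your proof is correct and follows essentially the same route as the paper's own argument: reducing each $\rho_j$ to its Gibbs normal form $W_j\tau(\rho_j)W_j^\dagger$, absorbing the local unitaries and the minimizing $\tilde W_1$ from the definition of $\ThEx$ into a single global Gaussian unitary $\bar W$, invoking Lemma~\ref{le:gau_u} (the $k=J$ case) for the lower bound, and saturating with a swap onto the mode of minimal thermal excitation. No gaps; the reduction and the achievability witness match the paper's proof step for step.
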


\subsection{The most efficient protocols of Gaussian HBAC}\label{app:A3}
The second consequence of Lemma~\ref{le:gau_u} is the following. 
\begin{lemma}\label{le:max_K}
Consider a $J$-mode Gibbs state $\tau=\otimes_{j=1}^J\tau_j$ with the mean excitation number of each mode in the descending order $\{\bar{n}_j^\downarrow\}$, and a $J$-mode Gaussian state $\rho=U^G\tau U^{G\dagger}$ obtained from $\tau$ via a Gaussian unitary $U^G$ with the mean excitation numbers $\bar{n}_j^\rho =\tr(\rho \hat{a}_j^\dagger \hat{a}_j)$. Let $\vec{K}=(K_1,\dots,K_{J'})$ ($J'\leq J$) be a real vector with its elements in the increasing order. If $\bar{n}_1^\rho=\bar{n}_J^\downarrow,\dots,\bar{n}^\rho_{J-{J'}}=\bar{n}_{{J'}+1}^\downarrow$, then
    \begin{equation}
        \sum_{j=1}^{J'} K_j \nb_{J-{J'}+j}^\rho \geq \sum_{j=1}^{J'} K_j \nb_j^\downarrow.
    \end{equation}
    The equality is saturated when $\nb^\rho_{J-{J'}+j}=\nb_j^\downarrow$ for $j=1,\dots,{J'}$.
    \begin{proof}
        For ${J'}=J$, we have
        \begin{eqnarray}
            \sum_{j=1}^J K_j \nb_{j}^\rho & = & \sum_{j=1}^J K_j\left(\sum_{l=j}^J \nb^\rho_l-\sum_{l=j+1}^J \nb_l^\rho\right)\nonumber\\
            & = & K_1\sum_{l=1}^J \nb^\rho_l + \sum_{j=2}^J(K_j-K_{j-1})\sum_{l=j}^J \nb^\rho_l\nonumber\\
            & \geq & K_1\sum_{l=1}^J \nb^\downarrow_l + \sum_{j=2}^J(K_j-K_{j-1})\sum_{l=j}^J \nb^\downarrow_l\nonumber\\
            & = & \sum_{j=1}^J K_j \nb_{j}^\downarrow.
        \end{eqnarray}
        In the third line, we have used Lemma~\ref{le:gau_u}. The equality is saturated when $\nb_j^\rho=\nb_j^\downarrow$ for all $j$.

        For ${J'}<J$, define $\vec{K}'$ such that $K'_{j} =K_j$ for $j=1,\dots,{J'}$, and $K_{{J'}}'\leq K_{{J'}+1}'\leq\dots\leq K'_{J-1}\leq K'_{J}$. It follows that
        \begin{eqnarray}
            \sum_{j=1}^{J'} K_j \nb_{J-{J'}+j}^\rho & = &  \sum_{j=1}^{J'} K'_j \nb_{J-{J'}+j}^\rho + \sum_{j={J'}+1}^{J} K'_j \nb_{J-j+1}^\rho -\sum_{j={J'}+1}^{J} K'_j \nb_{J-j+1}^\rho\nonumber\\
            & = & \sum_{j=1}^J K'_j \nb_{j}^{\rho'} - \sum_{j={J'}+1}^{J} K'_j \nb_{j}^\downarrow\nonumber\\
            & \geq & \sum_{j=1}^J K'_j \nb_{j}^\downarrow - \sum_{j={J'}+1}^{J} K'_j \nb_{j}^\downarrow  =  \sum_{j=1}^{J'} K'_j \nb_{j}^\downarrow.        
        \end{eqnarray}
        where $\rho'$ is a Gaussian state related to $\rho$ as $\nb^{\rho'}_j=\nb_{J-{J'}+j}^\rho$ for $j=1,\dots,{J'}$ and $\nb^{\rho'}_j=\nb_{J-j+1}^\rho$ for $j={J'}+1,\dots,J$.
    \end{proof}
\end{lemma}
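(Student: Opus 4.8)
The plan is to reduce the weighted inequality to the family of tail--sum bounds already furnished by Lemma~\ref{le:gau_u}, using Abel summation (summation by parts), and then to bootstrap from the full case $J'=J$ to the general case $J'<J$ by padding the weight vector.

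First I would dispatch the full case $J'=J$, for which the hypothesis on the pre--fixed modes is vacuous. Writing $S_j^\rho:=\sum_{l=j}^J\nb_l^\rho$ and $S_j^\downarrow:=\sum_{l=j}^J\nb_l^\downarrow$, summation by parts gives
\begin{equation}
\sum_{j=1}^J K_j\,\nb_j^\rho=K_1 S_1^\rho+\sum_{j=2}^J(K_j-K_{j-1})\,S_j^\rho,
\end{equation}
and the identical decomposition holds with $S_j^\downarrow$ on the right. Since $\vec K$ is in increasing order, every increment $K_j-K_{j-1}$ is non-negative, while Lemma~\ref{le:gau_u} supplies $S_j^\rho\ge S_j^\downarrow$ for each $j$ (the mean excitations of any $J-j+1$ output modes sum to at least the sum of the $J-j+1$ smallest input occupations). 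Matching the two decompositions term by term with non-negative coefficients yields the claim, and equality forces $\nb_j^\rho=\nb_j^\downarrow$ for all $j$.

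Next I would treat $J'<J$ by extending $\vec K$ to a full increasing vector $\vec K'$ with $K'_j=K_j$ for $j\le J'$ and $K_{J'}\le K'_{J'+1}\le\cdots\le K'_J$, and by introducing the relabelled Gaussian state $\rho'$ defined by $\nb_j^{\rho'}=\nb_{J-J'+j}^\rho$ for $j\le J'$ and $\nb_j^{\rho'}=\nb_{J-j+1}^\rho$ for $j>J'$. Because a permutation of modes is a passive Gaussian unitary, $\rho'$ is again a Gaussian image of $\tau$ and Lemma~\ref{le:gau_u} still applies to it. The hypothesis $\nb_i^\rho=\nb_{J-i+1}^\downarrow$ for $i\le J-J'$ is precisely what makes the ``extra'' modes of $\rho'$ coincide with the smallest sorted occupations; adding and subtracting $\sum_{j=J'+1}^J K'_j\,\nb_j^\downarrow$ then rewrites the target sum as the full-case sum for $(\rho',\vec K')$ minus this fixed tail, and the $J'=J$ result applied to $\rho'$ cancels the tail and returns the desired bound, with the stated equality condition inherited.

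The main obstacle I anticipate is the index bookkeeping in the reduction step, together with controlling the isolated leading term. Summation by parts singles out $K_1 S_1^\rho$, whose coefficient is not an increment, so the argument genuinely relies on $K_1\ge 0$ (equivalently, that the total occupation of $\rho$ dominates that of $\tau$ with a favourable sign under squeezing) --- automatic in the intended application where the $K_j$ are positive spectral weights. The part I would write out most carefully is verifying that the relabelling $\rho\mapsto\rho'$ is admissible and that the fixed-mode hypothesis maps exactly onto the padded indices $j=J'+1,\dots,J$, so that the subtracted tail is indeed $\sum_{j=J'+1}^J K'_j\,\nb_j^\downarrow$.
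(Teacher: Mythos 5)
Your proposal follows essentially the same route as the paper's own proof: Abel summation (summation by parts) against the tail-sum bounds of Lemma~\ref{le:gau_u} for the case $J'=J$, then padding $\vec{K}$ to an increasing full-length $\vec{K}'$ and relabelling $\rho$ into $\rho'$ so that the hypothesis on the fixed modes cancels against the subtracted tail $\sum_{j=J'+1}^{J}K'_j\nb_j^\downarrow$ for $J'<J$. Your explicit observation that the first Abel term requires $K_1\geq 0$ (which does not follow from ``real and increasing'' alone, but is automatic for the positive frequency weights in the application) is correct and is silently assumed in the paper's proof as well.
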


Now we are ready to present Theorem~\ref{thm:2} in the main text and its proof.
\begin{theorem}\label{thm:2_app}
    For any choice of $\omega_1,\dots,\omega_{N-1}$, among all the protocols which can achieve the cooling limit $\beta^*=\frac{\omega_N}{\omega_0}\beta$, the most efficient unitary in the recharging routine is
    \begin{eqnarray}
        V^*& := & \mathrm{arg}\min_{V\in\mathcal{U}^G_{N+1}}\Sigma_N(V,H_M)\nonumber\\
        & = & \Swap(S,M_N)\circ\dots\circ\Swap(S,M_{j_0}),
    \end{eqnarray}
    where $\Swap(\cdot,\cdot)$ is the unitary which swaps the states of two modes, $j_0$ is the minimum index such that $\omega_{j_0}>\omega_0$. Besides, the minimum entropy production reads
    \begin{equation}
        \Sigma_N^*(H_M)=D[\tau(\omega_0)\| \tau(\omega_{j_0})]+\sum_{j=j_0+1}^N D[\tau(\omega_{j-1}) \| \tau(\omega_j)],\label{eq:ent_pro_star_app}
    \end{equation}
    where $\tau(\omega_j)=(1-e^{-\beta\omega_j})\sum_{n=0}^\infty e^{-n\beta\omega_j}|n\rangle\langle n|$ is the Gibbs state of a harmonic oscillator with frequency $\omega_j$ and at inverse temperature $\beta$.
    \begin{proof}
        The initial states of the system and the machine are $\tau_S=\tau(\omega_0)$ and $\tau_M=\otimes_{j=1}^N\tau(\omega_j)$, respectively, and the final state of the system is $\rho_S'=\tau(\omega_N)$. With this, we have
        \begin{eqnarray}
            I_{S:M}(\rho'_{SM}) & = & S(\rho'_S)+S(\rho'_M)-S(\rho'_{SM})\nonumber\\
            & = & S(\tau(\omega_N))+S(\rho'_M)-S(\tau_S\otimes\tau_M)\nonumber\\
            & = & S(\rho'_M)-\sum_{j=0}^{N-1}S(\tau(\omega_j)).
        \end{eqnarray}

    It follows that
    \begin{eqnarray}\label{eq:sigma}
        \Sigma_N(U,H_M) & = & -\tr(\rho'_M \ln \tau_M)-S(\rho_M')+I_{S:M}(\rho'_{SM})\nonumber\\
        & = & -\tr(\rho'_M\ln(\otimes_{j=1}^N\tau(\omega_j)))-\sum_{j=0}^{N-1}S(\tau(\omega_j))\nonumber\\
        & = & -\sum_{j=1}^N\tr(\rho'_j\ln\tau(\omega_j))-\sum_{j=0}^{N-1}S(\tau(\omega_j)),
    \end{eqnarray}
    where $\rho'_j=\tr_{\backslash j}(\rho'_M)$. Let $\nb_j:=\tr(\tau(\omega_j)a_j^\dagger a_j)$ and $\nb_j'=\tr(\rho_j'a_j^\dagger a_j)$ be the mean excitation number of $\tau(\omega_j)$ and $\rho'_j$, respectively. Recalling that $\tau(\omega_j)=\sum_{l=0}^\infty\frac{\nb_j^l}{(\nb_j+1)^{l+1}}|l\rangle\langle l|$, we have
    \begin{eqnarray}\label{eq:sigma1}
        -\sum_{j=1}^N\tr(\rho'_j\ln\tau(\omega_j)) & = & \sum_{j=1}^N\sum_{l=0}^\infty \langle l|\rho_j'|l\rangle [(l+1)\ln(\nb_j+1)-l\ln \nb_j]\nonumber\\
        & = & \sum_{j=1}^N\sum_{l=0}^\infty l\langle l|\rho_j'|l\rangle\ln\frac{\nb_j+1}{\nb_j}+\sum_{j=1}^N\ln (\nb_j+1)\nonumber\\
        & = & \sum_{j=1}^N\nb'_j\ln\frac{\nb_j+1}{\nb_j}+\sum_{j=1}^N\ln(\nb_j+1)\nonumber\\
        & = & \beta\sum_{j=1}^N\nb'_j\omega_j+\sum_{j=1}^N\ln(\nb_j+1).
    \end{eqnarray}
    By setting $J=N+1$, $J'=N$, $\vec{K}=\{\omega_1,\dots,\omega_{N}\}$, $\{\bar{n}^\rho_j\}=\{\bar{n}'_0,\bar{n}'_1,\dots,\bar{n}_N'\}$, and $\{\bar{n}^\downarrow_j\}=\{\bar{n}_1,\dots,\bar{n}_{j_0-1},\bar{n}_0,\bar{n}_{j_0},\dots,\bar{n}_N\}$ in Lemma~\ref{le:max_K}, because $\nb_0'=\nb_N$, we have
    \begin{equation}\label{eq:sigma2}
        \sum_{j=1}^N\nb'_j\omega_j\geq \sum_{j=1}^{{j_0}-1} \nb_j\omega_j+\nb_0\omega_k+\sum_{j={j_0}+1}^N \nb_{j-1}\omega_j,
    \end{equation}
    where the equality is saturated when the system successively swap its state with the $j$th mode in the machine ($j={j_0},\dots,N$). Combining Eqs.~(\ref{eq:sigma}), (\ref{eq:sigma1}) and (\ref{eq:sigma2}) together, we obtain
    \begin{eqnarray}
        \Sigma_N^*(H_M) & = & \beta\left[\sum_{j=1}^{{j_0}-1} \nb_j\omega_j+\nb_0\omega_k+\sum_{j={j_0}+1}^N \nb_{j-1}\omega_j\right]+\sum_{j=1}^N\ln(\nb_j+1)-\sum_{j=0}^{N-1}S(\tau(\omega_j))\nonumber\\
        & = & \beta\left[\nb_0\omega_k+\sum_{j={j_0}+1}^N \nb_{j-1}\omega_j\right]+\sum_{j={j_0}}^N\ln(\nb_j+1)-S(\tau(\omega_0))-\sum_{j={j_0}+1}^{N}S(\tau(\omega_{j-1}))\nonumber\\
        & = & D[\tau(\omega_0)||\tau(\omega_k)]+\sum_{j={j_0}+1}^N D[\tau(\omega_{j-1})||\tau(\omega_j)].
    \end{eqnarray}

\end{proof}
\end{theorem}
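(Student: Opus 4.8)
The plan is to exploit the fact that saturating the cooling limit pins down the final system state, $\rho'_S=\tau(\omega_N)$, and then to show that the entropy production $\Sigma_N(V,H_M)=D[\rho'_M\|\tau_M]+I_{S:M}(\rho'_{SM})$ collapses to a quantity that depends on the protocol only through the local mean excitation numbers of the machine modes. First I would use that $V$ is a unitary (Gaussian) operation on $SM$, so $S(\rho'_{SM})=S(\tau_S\otimes\tau_M)=\sum_{j=0}^N S(\tau(\omega_j))$. Substituting this into $I_{S:M}=S(\rho'_S)+S(\rho'_M)-S(\rho'_{SM})$ with $\rho'_S=\tau(\omega_N)$, the $S(\tau(\omega_N))$ term cancels and leaves $I_{S:M}=S(\rho'_M)-\sum_{j=0}^{N-1}S(\tau(\omega_j))$. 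Combining this with $D[\rho'_M\|\tau_M]=-S(\rho'_M)-\tr(\rho'_M\ln\tau_M)$, the entropy $S(\rho'_M)$ of the (generally correlated) output machine state drops out, yielding $\Sigma_N=-\tr(\rho'_M\ln\tau_M)-\sum_{j=0}^{N-1}S(\tau(\omega_j))$.

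The decisive step is to observe that $-\tr(\rho'_M\ln\tau_M)$ only sees the per-mode marginals. Since $\tau_M=\otimes_{j=1}^N\tau(\omega_j)$ factorizes, $\ln\tau_M$ is a sum of single-mode operators, so $-\tr(\rho'_M\ln\tau_M)=-\sum_j\tr(\rho'_j\ln\tau(\omega_j))$ with $\rho'_j=\tr_{\backslash j}(\rho'_M)$. Expanding $\ln\tau(\omega_j)$ in the Fock basis and using $\ln\frac{\bar n_j+1}{\bar n_j}=\beta\omega_j$, each term becomes linear in $\bar n'_j=\tr(\rho'_j\hat a_j^\dagger\hat a_j)$, giving $-\tr(\rho'_M\ln\tau_M)=\beta\sum_{j=1}^N\omega_j\bar n'_j+\sum_{j=1}^N\ln(\bar n_j+1)$. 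Hence the only protocol-dependent piece of $\Sigma_N$ is the weighted sum $\beta\sum_{j=1}^N\omega_j\bar n'_j$, with \emph{increasing} weights $\omega_1\le\cdots\le\omega_N$, to be minimized over all Gaussian $V$ respecting the cooling constraint $\bar n'_0=\bar n_N$.

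This is exactly the constrained weighted-majorization problem handled by Lemma~\ref{le:max_K}, which is built on the majorization inequality of Lemma~\ref{le:gau_u}. I would apply it with $J=N+1$, $J'=N$, weights $\vec K=(\omega_1,\dots,\omega_N)$, output means $(\bar n'_0,\dots,\bar n'_N)$, and the descending rearrangement of the inputs $\{\bar n_1,\dots,\bar n_{j_0-1},\bar n_0,\bar n_{j_0},\dots,\bar n_N\}$, where $j_0$ is fixed by $\omega_{j_0-1}\le\omega_0<\omega_{j_0}$ so that $\bar n_0$ slots in at position $j_0$. Because the fixed system value $\bar n'_0=\bar n_N$ is the smallest entry, the hypotheses of the lemma hold and it yields $\sum_{j=1}^N\omega_j\bar n'_j\ge\sum_{j=1}^{j_0-1}\omega_j\bar n_j+\omega_{j_0}\bar n_0+\sum_{j=j_0+1}^N\omega_j\bar n_{j-1}$; tracing through the lemma's saturation case shows the bound is attained precisely when the system successively swaps with modes $M_{j_0},\dots,M_N$, i.e.\ by $V^*$.

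Finally I would substitute the saturating assignment back into $\Sigma_N$. The contributions of the untouched modes $j<j_0$ cancel against their entropies since $\beta\omega_j\bar n_j+\ln(\bar n_j+1)=S(\tau(\omega_j))$, and regrouping the survivors through the identity $D[\tau(\omega_a)\|\tau(\omega_b)]=-S(\tau(\omega_a))+\beta\omega_b\bar n_a+\ln(\bar n_b+1)$ reassembles Eq.~\eqref{eq:ent_pro_star} as the telescoping sum $D[\tau(\omega_0)\|\tau(\omega_{j_0})]+\sum_{j=j_0+1}^N D[\tau(\omega_{j-1})\|\tau(\omega_j)]$. I expect the main obstacle to be the second step: recognizing that, although $\rho'_M$ is an arbitrary correlated multimode Gaussian state, both the $S(\rho'_M)$ term vanishes and the residual cross-entropy depends \emph{only} on the single-mode mean excitations — this is what converts an a priori infinite-dimensional optimization over Gaussian unitaries into a finite weighted-majorization statement. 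The accompanying delicate bookkeeping is placing $\bar n_0$ at index $j_0$ and checking that the fixed value $\bar n_N$ is the minimal entry, so that Lemma~\ref{le:max_K} applies with equality realized by the swap chain.
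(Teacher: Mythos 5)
Your proposal is correct and follows essentially the same route as the paper's proof: the cancellation of $S(\rho'_M)$ between the relative-entropy and mutual-information terms, the reduction of $-\tr(\rho'_M\ln\tau_M)$ to a weighted sum $\beta\sum_j\omega_j\bar n'_j$ over single-mode marginals, the application of Lemma~\ref{le:max_K} with the identical parameter assignment (including slotting $\bar n_0$ at position $j_0$ and using $\bar n'_0=\bar n_N$ as the constrained smallest entry), and the final regrouping into the telescoping sum of relative entropies. No gaps; the bookkeeping you flag as delicate is handled exactly as in the paper.
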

\begin{figure}[!htbp]
    \centering
    (a)\includegraphics[width=0.4\textwidth]{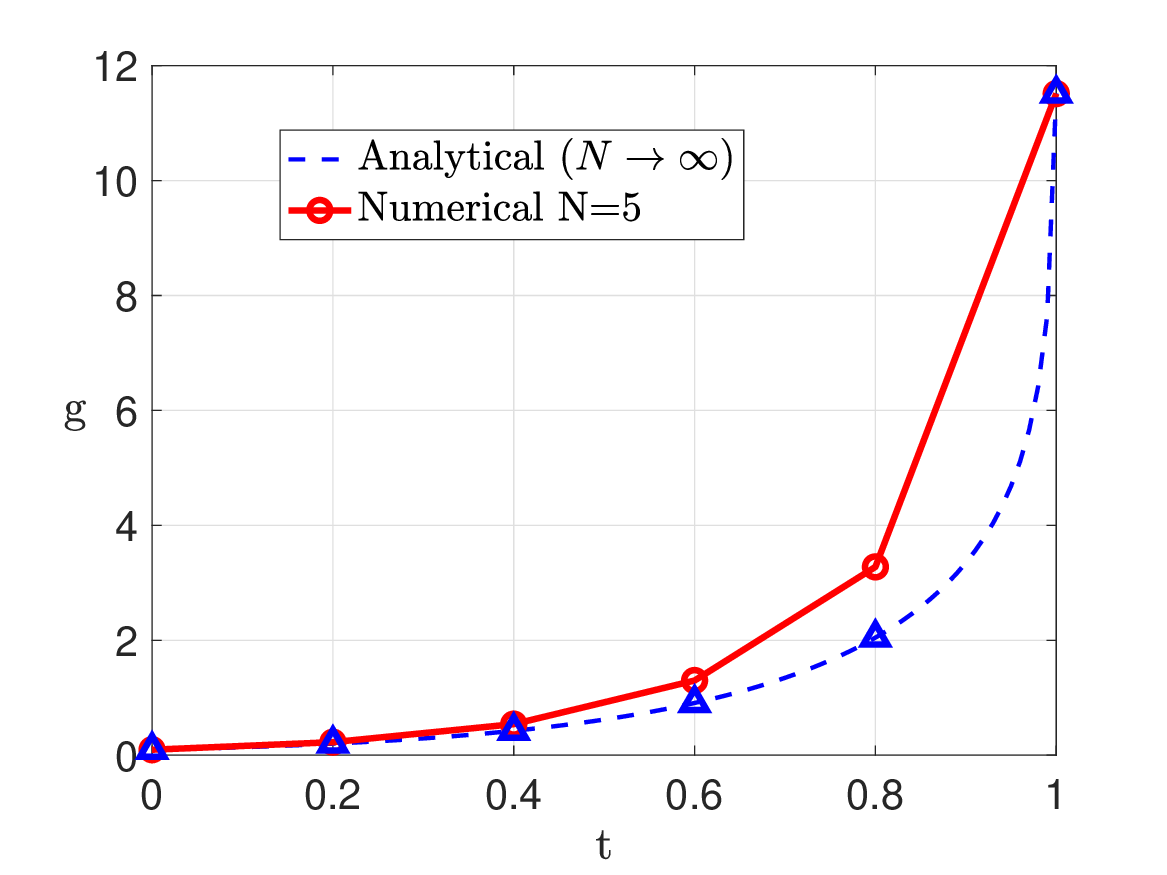}
    (b)\includegraphics[width=0.4\textwidth]{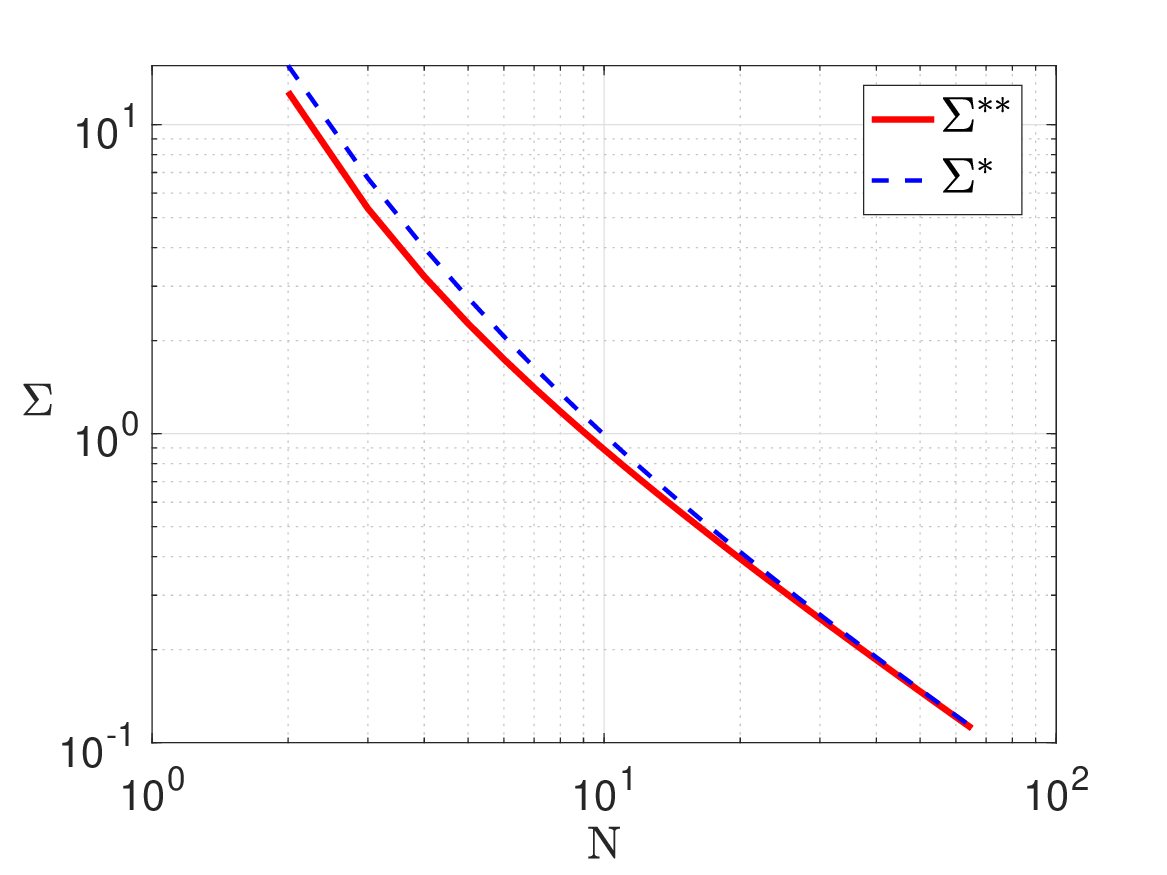}
    \caption{ (a) Trajectory $g$ vs $t$ for cooling from $\bar{n}_s(0) =10$ to $\bar{n}_s(1) =10^{-5}$, which corresponds to  $\lambda = 120.8 $. The red markers represent optimized energy gap of the machine for five timesteps. The blue markers are the discrete solution sampled from the analytical result. (b) Entropy production vs $N$, where $\Sigma_N^{**} (\Sigma_N^{*})$ is the entropy production with numerical solution (discrete timestep solution sampled from the analytical curves) }
    \label{fig:Traj}
\end{figure}
In the limit where $N$ is large, such that $(g_j-g_{j-1})^k N^2$ is infinitesimal for $k\geq3$, from Eq.~(\ref{eq:opt_spec}), we have the following
\begin{eqnarray}
    \frac{\frac{g_{j+1}-g_j}{1/N}-\frac{g_{j}-g_{j-1}}{1/N}}{1/N} & = & \left(\frac{1}{1/N}\right)^2\left[(e^{g_j-g_{j-1}}-1)\cdot\frac{1-e^{-g_{j}}}{1-e^{-g_{j-1}}}-(g_j-g_{j-1})\right]\nonumber\\
    & = & \left(\frac{1}{1/N}\right)^2\left[(g_j-g_{j-1})\cdot\left(\frac{1-e^{-g_{j}}}{1-e^{-g_{j-1}}}-1\right)+\left(\frac{1}{2}(g_j-g_{j-1})^2+\frac{1}{3!}(g_j-g_{j-1})^3+\dots\right)\cdot\frac{1-e^{-g_{j}}}{1-e^{-g_{j-1}}}\right]\nonumber\\
    & = & \left(\frac{1}{1/N}\right)^2\left[(g_j-g_{j-1})\cdot\frac{1-e^{g_{j-1}-g_j}}{e^{g_{j-1}}-1}+\left(\frac{1}{2}(g_j-g_{j-1})^2+\frac{1}{3!}(g_j-g_{j-1})^3+\dots\right)\cdot\frac{e^{g_{j-1}}-e^{g_{j-1}-g_{j}}}{e^{g_{j-1}}-1}\right]\nonumber\\
    & = & \left(\frac{g_j-g_{j-1}}{1/N}\right)^2\cdot\frac{e^{g_{j-1}}+1}{2(e^{g_{j-1}}-1)}+o[N^2(g_j-g_{j-1})^3].
\end{eqnarray}
When $g_j$ is approximately continuous, let $t=j/N$, $dt=1/N$ and $g_j=g(t)$. The above equality reduces to
\begin{equation}
    \ddot{g}=\dot{g}^2\cdot\frac{e^{g_{j-1}}+1}{2(e^{g_{j-1}}-1)}.\label{eq:opt_tra_eq}
\end{equation}
This equation can be easily solved, and we obtain the following optimal spectrum of the machine
\begin{equation}
    g^*_t=2\ln\left\{-\coth\left[\frac{t}{2}\ln\tanh\frac{g_N}{4}+\frac{1-t}{2}\ln\tanh\frac{g_0}{4}\right]\right\},
\end{equation}
or in the discrete form
\begin{equation}
    g^*_j=2\ln\left\{-\coth\left[\frac{j}{2N}\ln\tanh\frac{g_N}{4}+\frac{N-j}{2N}\ln\tanh\frac{g_0}{4}\right]\right\}.
\end{equation}
Therefore, the optimal entropy production for large $N$ reads
\begin{equation}
    \Sigma^*_{N\gg1}=\frac{1}{2N}\left[\frac{1}{2}\ln\frac{\tanh(\beta\omega_N/4)}{\tanh(\beta\omega_0/4)}\right]^2+o(\frac{1}{N^2}). \label{eq:solDiscrete}
\end{equation}

 We find that the numerical solutions of the cooling trajectory and the corresponding entropy production for finite $N$ begin to deviate from the large-$N$ limit when targeting very low mean excitations, on the order of $10^{-5}$ (see Fig.  \ref{fig:Traj}). 
For moderate cooling targets with mean excitations around $10^{-2}$, the deviation from the large-$N$ behavior becomes negligible. In such cases, one can get minimized entropy production by using solution that sampled from $N$ equally spaced points along the analytical trajectory given in Eq.~\ref{eq:solDiscrete}. 

\section{Appendixes for Non-Gaussian cooling}

\subsection{Evolution of mean excitation number under Heisenberg picture }\label{app:B1}

 The mean excitation number of target after single operation of $V^{(p)}(t)$ is given by

 \begin{eqnarray}
         \bar{n}_S(t) &= &\tr\left[ V^{(p)}(t)(\rho_S \otimes \rho_M) V^{(p) \dagger}(t) (\hat{n}_S \otimes \iden) \right ]\nonumber\\
         & =& \tr\left[ \mathcal{E}_t(\rho_S) \hat{n}_S \right ]
 \end{eqnarray}
where $\mathcal{E}_t(\rho_S) = \tr_M[V^{(p)}(t)(\rho_S \otimes \rho_M) V^{(p) \dagger}(t)] $. Equivalently in Heisenberg picture, we have

\begin{align}
    \bar{n}_S(t) &= \tr\left[ \tau_S  \mathcal{E}_t^\dag (\hat{n}_S ) \right ]
\end{align}

where $\mathcal{E}_t^\dag(\hat{n}_S) = \tr_M \left[ (\iden \otimes \sqrt{\tau_M}) V^{(p) \dag}(t) (\hat{n}_S \otimes \iden ) V^{(p)}(t) (\iden \otimes \sqrt{\tau_M}) \right]$. In short-time regime $\chi t\ll 1$, the mean excitation number is approximate using the Baker-Campbell Haussdorf  (BCH) expansion 
\begin{equation}
    V^{(p) \dag}(t) (\hat{n}_S \otimes \iden ) V^{(p)}(t)  \approx \bar{n}_S + i [H^{(p)},\hat{n}_S]t-\frac{1}{2} [H^{(p)},[H^{(p)},\hat{n}_S]] t^2
\end{equation}
where
\begin{equation}
    [H^{(p)},\hat{n}_S]= \chi (\hat{a} b^{\dag p} - \hat{a}^\dag b^{p} )\label{eq:2ndOrder}
\end{equation}
\begin{equation}
   [H^{(p)},[H^{(p)},\hat{n}_S]] = 2 \chi^2 \left[\hat{n}_S [\hat{b}^p , \hat{b}^{\dag p}] - \hat{b}^{\dag p} \hat{b}^p  \right] + \chi (\omega_0 - p \omega_1) (\hat{a} \hat{b}^{\dag p} + \hat{a}^\dag \hat{b}^p  ) \label{eq:3rdOrder}
\end{equation}


If the initial states of the system and machine are assumed to to Gibbs state $\tau_S$, $\tau_M$  with mean excitation number $\bar{n}_S$ and $ \bar{n}_M$ respectively, the evolution of the mean excitation number of target $\bar{n}_{S}(t)$ and machine $\bar{n}_{M}(t)$ are approximate to (in the small time perturbation).

\begin{eqnarray}
    \bar{n}_{S}(t) &= & \bar{n}_S - \left[ \bar{n}_S \langle[\hat{b}^p , \hat{b}^{\dag p}] \rangle_{\tau_M} - \langle \hat{b}^{\dag p} \hat{b}^p  \rangle_{\tau_M} \right] \chi ^2 t^2 + \mathcal{O}(t^4) +...\nonumber\\
    &= & \bar{n}_S - \left[ (1+\bar{n}_M)^p \bar{n}_S  - \bar{n}_M^p (1+\bar{n}_S )\right] p! \chi ^2 t^2 + \mathcal{O}(t^4) +...
\end{eqnarray}

where we have used
\begin{eqnarray}
    \sum_{n=0}^{\infty}\frac{\bar{n}^{n}}{(\bar{n}+1)^{n+1}}\langle \hat{b}^{\dag p} \hat{b}^p  \rangle_n &=& \sum_{n=0}^{\infty}\frac{\bar{n}^{n}}{(\bar{n}+1)^{n+1}}\frac{n!}{(n-p)!}\nonumber\\
    &= & p! \bar{n}^p 
\end{eqnarray}

\begin{eqnarray}
    \sum_{n=0}^{\infty}\frac{\bar{n}^{n}}{(\bar{n}+1)^{n+1}}\langle \hat{b}^p  \hat{b}^{\dag p}  \rangle_n &=& \sum_{n=0}^{\infty}\frac{\bar{n}^{n}}{(\bar{n}+1)^{n+1}}\frac{(n+p)!}{(n!)}\nonumber\\
    &=& p! (1+\bar{n})^p
\end{eqnarray}

Using $tr[\tau_{S,M} a^m b^{\dag n }]= 0$, the contributions from Eq.~\eqref{eq:2ndOrder} and the  second term in Eq.~\eqref{eq:3rdOrder} vanish. Thus, $\bar{n}_{S}(t) $ is also valid for nonresonance case $\omega_0 \neq p\omega_1 $ if (i) the initial state $\tau$ is state that is diagonal in the Fock basis. (ii) short-time perturbation limit. 
Following similar calculations, we show that the first moment $\alpha(t)=\tr(\tau_S\mathcal{E}_t^\dagger(\hat{a}))=0$ and the off-diagonal element of the second moment $\nu(t)=\tr(\tau_S\mathcal{E}_t^\dagger(\hat{a}^2))=0$. Therefore, the thermal excitation of the output state $\rho_S'(t)=\mathcal{E}_t(\tau_S)$ equals to its mean excitation number, i.e., $\ThEx(\rho'_S(t))=\bar{n}_S(t)$.

\subsection{Cooling limit for iterative cooling (matrix algebra) }\label{app:B2}
The cooling protocol in the previous section is then repeated with $L$ iteration. In each iteration, the target is interact with reset qubit $\tau_S \otimes \tau_M$ with fixed machine gap $\omega_S,\omega_M $ and interaction duration $t$.  The mean photon number after $L$ iterations is given by 
 \begin{align}
         \bar{n}_S^{(L)}  = \tr\left[ \underbrace{\mathcal{E}_{t} \circ ... \circ \mathcal{E}_{t} }_{L \text{times} } (\rho_S) \hat{n}_S  \right ].
 \end{align}
Equivalently in the Heisenberg picture, we have
\begin{align}
    \bar{n}_S^{(L)} 
    &= \tr\left[ \rho_S  \underbrace{\mathcal{E}_{t}^\dag \circ ... \circ \mathcal{E}_{t}^\dag }_{L \text{times} } (\hat{n}_S ) \right ], \label{eq:HeisenbergPic}
\end{align}
and further,
\begin{equation}
    \mathcal{E}_{t}^\dag (\hat{n}_S) \approx \hat{n}_S - g^2  t^2 \left[\hat{n}_S (4 \bar{n}_M + 2) - 2 \iden \bar{n}_M^2  \right] + \mathcal{O}(t^4) .
\end{equation}
Then for $L$ iterations, we have
\begin{equation}
    \underbrace{\mathcal{E}_{t}^\dag \circ ... \circ \mathcal{E}_{t}^\dag }_{L \text{ times} } (\hat{n}_s)  = \underbrace{\mathcal{E}_{t}^\dag \circ ... \circ \mathcal{E}_{t}^\dag }_{(L -1) \text{ times}  }(\mathcal{E}_{t}^\dag (\hat{n}_s)), 
\end{equation}
and hence,
\begin{equation}
    \underbrace{\mathcal{E}_{t}^\dag \circ ... \circ \mathcal{E}_{t}^\dag }_{L \text{times} } (\hat{n}_S) \approx \underbrace{\mathcal{E}_{t}^\dag \circ ... \circ \mathcal{E}_{t}^\dag }_{(L -1) \text{ times}  }  (\hat{n}_S) - g^2  t^2 \left[ \underbrace{\mathcal{E}_{t}^\dag \circ ... \circ \mathcal{E}_{t}^\dag }_{(L -1) \text{ times}  }  (\hat{n}_S) (4 \bar{n}_M + 2) - 2 \iden \bar{n}_M^2  \right] + \mathcal{O}(t^4) .
\end{equation}
Plugging back into Eq. \ref{eq:HeisenbergPic}, we get the iterating function
\begin{equation}
\bar{n}_S^{(L)} \approx \bar{n}_S^{(L-1)} - g^2  t^2 \left[ \bar{n}_S^{(L-1)}  (4 \bar{n}_M + 2) - 2  \bar{n}_M^2  \right] + \mathcal{O}(t^4) .\label{eq:n_s_M_iterating}
\end{equation}
The closed form of the mean excitation number in term of initial mean excitation after $L$ iterations of cooling is as follows
\begin{equation}
    \bar{n}_S^{(L)} \approx \bar{n}_S^{(0)}(1-a)^L + b \frac{1-(1-a)^L}{a}
\end{equation}
where $a = \chi^2 t^2[(1+\bar{n}_M)^p + \bar{n}_M^p]p!$ and $b = \chi^2 t^2 \bar{n}_M^p p!$. 

\subsection{Thermality of the system after many iterations }\label{app:B3}

It is apparent that the system evolves into a transient state after a single collision which is not Gaussian nor thermal. In previous section, we found that the the mean excitation of the system converge to a thermal excitation. This raises a question whether this is coincidence or the final state actually equilibrates to a Gibbs state. To verify that, we use Fano factor $q(t)$ defined as (Eq.~9 in \cite{nimmrichter2017quantum})
\begin{equation}
    q(L) = \frac{\langle \hat{n}_S^2\rangle_L - \langle \hat{n}_S\rangle_L^2}{\langle \hat{n}_S\rangle_L(\langle \hat{n}_S\rangle_L +1)} -1
\end{equation}
where $\langle\bullet\rangle_L = \tr (\tau_S \underbrace{\mathcal{E}_{t}^\dag \circ ... \circ \mathcal{E}_{t}^\dag }_{L \text{times}} (\bullet))$. If $q(t) = 0$ then the system state has Gibbs distribution at the $L$th iteration . In the following calculation, we are interested in whether $\lim_{L\rightarrow \infty} q(t) = 0$, where 
\begin{equation}
    \lim_{L\rightarrow \infty} q(L) = \frac{\langle \hat{n}_S^2\rangle_\infty - \langle \hat{n}_S\rangle_\infty^2}{\langle \hat{n}_S\rangle_\infty(\langle \hat{n}_S\rangle_\infty +1)} -1, \label{eq:FanoInf}
\end{equation}
The analytical form of $\langle \hat{n}_S^2\rangle_1 $ and  $ \langle \hat{n}_S\rangle_1$ are given by
\begin{equation}
    \langle \hat{n}_S\rangle_1 =(1-a) \bar{n}_S^{(0)} + c,
\end{equation}
and
\begin{equation}
    \langle \hat{n}_S^2\rangle_1 =(1-2a) \bar{n^2}_S^{(0)} + b \bar{n}_S + c,
\end{equation}
where $\bar{n}_S^{(0)} = \langle \hat{n}_S\rangle_{\tau_S} $ and $\bar{n^2}_S^{(0)} = \langle \hat{n}_S^2\rangle_{\tau_S}$. Here, $a = \chi^2 t^2[(1+\bar{n}_M)^p - \bar{n}_M^p]p!$, $b=\chi^2 t^2[(1+\bar{n}_M)^p + 3\bar{n}_M^p]p!$ and $c = \chi^2 t^2 \bar{n}_M^p p!$. After $L$ iterations, the $\langle \hat{n}_S^2\rangle_L $ and  $ \langle \hat{n}_S\rangle_L$ in term of initial mean excitation read
\begin{equation}
    \bar{n}_S^{(L)} \approx \bar{n}_S^{(0)}(1-a)^L + c \frac{1-(1-a)^L}{a}
\end{equation}
and
\begin{align}
    \bar{n^2}_S^{(L)}
    &= (1 - 2a)^L \bar{n^2}_S^{(0)}
    + \bar{n}_S^{(0)} b \frac{1-a}{a} \Big[(1 - a)^L - (1 - 2a)^L\Big] \\
    &\quad + \frac{b c}{a}
    \left(
    \frac{1 - (1 - 2a)^L}{2a}
    - \frac{(1 - a)^L - (1 - 2a)^L}{a}
    \right)
    + c\,\frac{1 - (1 - 2a)^L}{2a}.
\end{align}
When $L \rightarrow \infty$ and $a\ll1$ (as implies by $\chi t \ll 1$), we have 
\begin{equation}
    \langle \hat{n}_S\rangle_\infty = \bar{n}_S^{(\infty)} \approx c/a
\end{equation}
\begin{equation}
    \langle \hat{n}_S^2\rangle_\infty=\bar{n^2}_S^{(\infty)} \approx \frac{c}{2a} \left( 1+ \frac{b}{c} \right)
\end{equation}
Plugging this into the Eq. \ref{eq:FanoInf} and using the relation  $1+b/c = 2 (1+ 2 c/a)$, we have

\begin{equation}
    \lim_{L\rightarrow \infty} q(L) = 0, \forall p.
\end{equation}
Therefore, when the cooling condition for $p$-excitation exchange is satisfied, its collision in larger $L$ limit equilibrates toward a Gibbs state for all nonlinearity order $p$. We verify the result with numerical example with $p=2,3$ (see Fig. \ref{fig:Gibbsconverge} ), where one can see that  the transient state (finite $L$) does not exhibit Gibbs distribution; however, in large $L$ limit the equilibrated state exhibits Gibbs distribution. Effectively, the collisional model with $p$-excitation exchange map a Gibbs state to Gibbs state, while in between the evolution can be out-of equilibrium, 

\begin{equation}
    \underbrace{\mathcal{E}_{t} \circ ... \circ \mathcal{E}_{t} }_{L \rightarrow \infty } (\tau_S (H_S,T_S)) = \tau_f (H_f,T_f),
\end{equation}
where $H_f$ is also a quadratic Hamiltonian, i.e., a harmonic oscillator.


\begin{figure}
    \centering
    \includegraphics[width=0.5\linewidth]{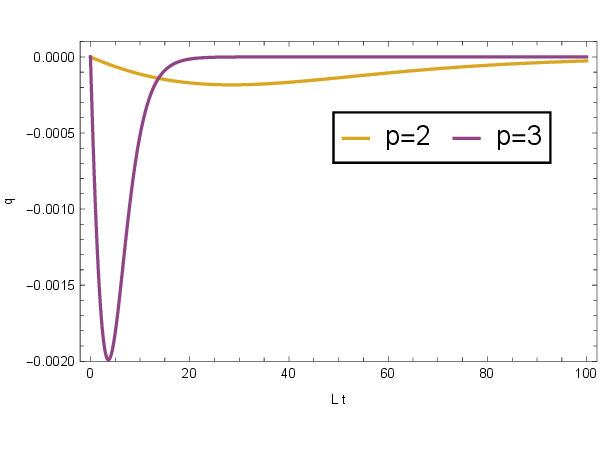}
    \caption{Plot of $q(L)$ vs $Lt$ for $p=2,3$ for iterative cooling with $L=1,2,3,...,2\times 10^4$ and $t=5\times10^{-3} s$ and initial mean photon number of  $\bar{n}_S^{(0)} = 2$, and $\bar{n}_M^{(0)}=1.5$.}
    \label{fig:Gibbsconverge}
\end{figure}

\end{document}